\tikzstyle{new edge style 1}=[->,line width=2mm,-{Latex[width=0pt 3, length=6mm]}]
\let\shortcite\cite
\newcommand{\rp}{{\rm RP}}
\newcommand{\p}{{\rm P}}
\newcommand{\np}{{\rm NP}}
\newtheorem{mytheorem}{Theorem}
\newtheorem{lemma}[mytheorem]{Lemma}
\newtheorem{example}{Example}
\newcommand{\prob}[3]{
\begin{description}
  \item[Name:] #1
  
  \item[Given:] #2
  
  \item[Question:] #3
\end{description}}
\newcommand{\fsp}{\ensuremath{{f\!s}_p}}
\newcommand{\fvp}{\ensuremath{{fv}_p}}
\begin{document}

\title{Using Weighted Matching to Solve 2-Approval/Veto Control and Bribery}

\author{Zack Fitzsimmons\\
 Dept.\ of Math.\ and Computer Science\\
 College of the Holy Cross\\
 Worcester, MA 01610 \and
  Edith Hemaspaandra\\
  Department of Computer Science\\
  Rochester Institute of Technology \\
  Rochester, NY 14623}%

\date{May 26, 2023}
\maketitle

\begin{abstract}
Determining the complexity of election attack problems is a major research direction in the computational study of voting problems. The paper ``Towards completing the puzzle: complexity of control by replacing, adding, and deleting candidates or voters’’ by Erd{\'e}lyi et al. (JAAMAS 2021) provides a comprehensive study of the complexity of control problems. The sole open problem is constructive control by replacing voters for 2-Approval. We show that this case is in P, strengthening the recent RP (randomized polynomial-time) upper bound due to Fitzsimmons and Hemaspaandra (IJCAI 2022). We show this by transforming 2-Approval CCRV to weighted matching. We also use this approach to show that priced bribery for 2-Veto elections is in P. With this result, and the accompanying (unsurprising) result that priced bribery for 3-Veto elections is NP-complete, this settles the complexity for $k$-Approval and $k$-Veto standard control and bribery cases.
\end{abstract}

\section{Introduction}
Elections are the most widely-used way to aggregate the preferences of a group
of agents over a common set of alternatives. Applications include political
elections as well as multiagent systems in artificial intelligence applications.
Thus it is important to study computational aspects of elections including problems
such as winner determination and different ways of strategically affecting the outcome,
which are referred to as election-attack problems (see, e.g.,~ Brandt et al.~\cite{bra-con-end-lan-pro:b:comsoc-handbook}).

Electoral control and bribery are two important examples of election-attack problems. Control models the actions of an election chair who
has control over the structure of the election (e.g., the set of voters) and modifies this structure to ensure a preferred outcome (e.g., by deleting voters)~\cite{bar-tov-tri:j:control}.
Different control actions model real-world settings such as get-out-the-vote drives or
adding ``spoiler'' candidates to an election to a preferred candidate wins.
The standard control cases of 
adding or deleting voters or candidates can be naturally extended to 
replacing voters or candidates, which for example models settings where the size of the collection of voters is fixed to its initial amount (e.g., in a parliament) and so the control action must work within this restriction~\cite{lor-nar-ros-bre-wal:c:replacing-control}.
Bribery considers how a
strategic agent can set the votes of a subcollection of the voters to
ensure a preferred outcome~\cite{fal-hem-hem:j:bribery}.
The computational study of control and bribery is a major research direction in
computational social choice~(see~Faliszewski and Rothe~\shortcite{fal-rot:b:handbook-comsoc-control-and-bribery}). %

Asking voters to state their most preferred candidate where candidates with the highest
number of approvals win is one of the most natural ways to elicit preferences to reach a group 
decision. This is referred to as the Plurality rule. We
consider $k$-Approval, which generalizes this for fixed $k$. We additionally consider $k$-Veto where
each voter vetoes their $k$ least preferred candidates, and
candidates with the fewest vetoes win.

The study of electoral control has led to many different papers considering the complexity of different actions for important voting rules. A recent comprehensive study on the complexity of control by Erd{\'e}lyi et al.~\shortcite{erd-nev-reg-rot-yan-zor:j:towards-completing} sought to
settle the complexity of the open problems that have remained for control by replacing, adding, and deleting candidates or voters. In the large number of cases summarized and completed
in this work only the complexity of control by replacing voters for 2-Approval remained open. In
very recent work by Fitzsimmons and Hemaspaandra~\shortcite{fit-hem:c:random-classification} this problem
was shown to be in RP (randomized polynomial-time). We strengthen this result and show this problem in \p, thus completing the goal of the comprehensive work by Erd{\'e}lyi et al.~\shortcite{erd-nev-reg-rot-yan-zor:j:towards-completing}.

The complexity of standard control and bribery for $k$-Approval and $k$-Veto elections was systematically
studied by 
Lin~\shortcite{lin:c:manip-k-app,lin:thesis:elections}, which left
only a few open cases. Several of these open cases concern weighted voting and
were handled by Faliszewski, Hemaspaandra, and Hemaspaandra~\shortcite{fal-hem-hem:j:weighted-control}.
The last remaining open cases were 2-Approval and 2/3-Veto priced bribery, where setting the votes of different voters may incur different costs. Bredereck et al.~\shortcite{bre-gol-woe:t:2-approval} state that the 2-Approval case is in \p\ in the summary for a Dagstuhl working group.
The 2-Veto case was incorrectly claimed \np-complete by Bredereck and Talmon~\shortcite{bre-tal:j:edge-cover} and their 
classification of the 3-Veto case relied on the same incorrect proof. We resolve these issues in the
literature in the present paper.

We summarize the main contributions of our paper below.
\begin{itemize}
\item We show that (constructive) 
control by replacing voters for 2-Approval, the sole remaining open case from Erd{\'e}lyi et al.~\shortcite{erd-nev-reg-rot-yan-zor:j:towards-completing}, that was very recently shown to be in \rp\ (randomized polynomial time) by Fitzsimmons and Hemaspaandra~\shortcite{fit-hem:c:random-classification}, is in fact in \p~(Theorem~\ref{thm:2-app-ccrv}).
We mention here that this result was previously reported on in our technical report~\cite{fit-hem:t:random-classification}.

\item We also settle the complexity of the last remaining cases from the study of $k$-Approval and $k$-Veto by Lin~\shortcite{lin:c:manip-k-app,lin:thesis:elections} by showing that priced bribery for 2-Veto is in \p\ (Theorem~\ref{thm:2-veto-pb}), while priced bribery for 3-Veto is \np-complete (Theorem~\ref{thm:3-veto-pb}).

\item To prove our polynomial-time results we use transformations from voting problems to weighted
matching. Weighted matching has been seldom used in the computational study of voting problems and our nontrivial polynomial-time results illustrate its usefulness.
\end{itemize}

\section{Preliminaries}

An election $(C,V)$ consists of a set of candidates $C$ and a collection of voters $V$ where each voter has a corresponding strict total order ranking of the candidates in $C$, referred to as a vote. A voting rule is a mapping from an election
to a subset of the candidate set referred to as the winners.

Our results concern the %
$k$-Approval and $k$-Veto voting rules. In a $k$-Approval election each candidate receives one point from each voter who ranks them among their top-$k$ 
preferred candidates and candidates with the highest score win. In a $k$-Veto election each candidate receives one veto from each voter who ranks them among their bottom-$k$ preferred candidates and candidates with the fewest vetoes win.

\subsection{Election-Attack Problems}

Electoral control and bribery denote well-studied families of election-attack problems (see~Faliszewski and Rothe~\shortcite{fal-rot:b:handbook-comsoc-control-and-bribery}).
To solve important open questions in the literature we consider %
control by replacing voters and
priced bribery.

Control by replacing voters~\cite{lor-nar-ros-bre-wal:c:replacing-control} models scenarios where the election chair with control over the structure of the election replaces a subcollection of the voters with the same number of unregistered voters to ensure a preferred candidate wins. This model ensures that the size of the collection of voters in the election is the same after the control action. %
We formally define this problem below.

\prob{${\cal R}$-Constructive Control by Replacing Voters~(CCRV)~\cite{lor-nar-ros-bre-wal:c:replacing-control}}%
{Given an election $(C,V)$, a collection of unregistered voters $W$, preferred candidate $p$, and integer $k$ such that $0 \le k \le \|V\|$.}%
{Do there exist subcollections $V' \subseteq V$ and $W' \subseteq W$
such that $\|V'\| = \|W'\| \le k$ and $p$ is a winner of the election $(C, (V - V') \cup W')$ using voting rule ${\cal R}$?}

Bribery, introduced by Faliszewski, Hemaspaandra, and Hemaspaandra~\shortcite{fal-hem-hem:j:bribery},
models the actions of an agent who sets the votes of a subcollection of the voters to ensure a preferred candidate wins~\cite{fal-hem-hem:j:bribery}. We mention that bribery can be thought of as modeling the campaign
costs of an election where an agent must determine if there is a subcollection of voters that can be convinced to change their vote to ensure a preferred candidate wins. We consider the problem of priced bribery, which we formally define below.

\prob{${\cal R}$-\$Bribery~\cite{fal-hem-hem:j:bribery}}%
{Given an election $(C,V)$ where each voter $v \in V$ has (integer) price $\pi(v) \ge 0$,
preferred candidate $p$, and budget $k \ge 0$.}%
{Is there a way to change the votes of a subcollection of voters $V' \subseteq V$ within the budget (i.e., $\pi(V') = \sum_{v \in V'} \pi(v) \le k$)
such that $p$ is a winner using voting rule~${\cal R}$?}

As is standard we look at the nonunique-winner model. Erd{\'e}lyi et al.~\cite{erd-nev-reg-rot-yan-zor:j:towards-completing} remark that their proofs can be modified to work for the unique-winner model, and we remark that our proofs can be easily modified to work for the unique-winner model as well.

\subsection{Matching}

For our polynomial-time algorithms, we will reduce to the following polynomial-time computable weighted matching problems (we will be writing $V_G$ for sets of vertices to distinguish them from collections $V$ of voters).

\prob{Max-Weight Perfect $b$-Matching for Multigraphs}%
{An edge-weighted multigraph $G = (V_G,E)$,
where each edge has nonnegative integer weight,
a function $b: V_G \to \mathbb{N}$, and
integer $k \geq 0$.}%
{Does $G$ have a perfect matching of weight at least $k$, i.e., does there exist a set of edges $E' \subseteq E$ of weight at least $k$ such
that each vertex $v \in V_G$ is incident to exactly $b(v)$ edges in $E'$?}

We define Min-Weight Perfect $b$-Matching for Multigraphs analogously.

For simple graphs, Max-Weight and Min-Weight Perfect $b$-Matching are in P~\cite{edm-joh:c:matching}, and  it is easy to reduce Max[Min]-Weight Perfect $b$-Matching for Multigraphs to Max[Min]-Weight Perfect Matching (Lemma~\ref{t:XbtoX}) by generalizing the reduction from Perfect $b$-Matching for Multigraphs to Perfect Matching using the construction from Tutte~\shortcite{tut:j:factor} (see, e.g., Berge~\cite[Chapter 8]{ber:b:graphsbook}).
Lemma~\ref{t:XbtoX} shows this for Max-Weight. The case for Min-Weight is analogous.

\begin{lemma}
\label{t:XbtoX}
Max-Weight Perfect $b$-Matching for Multigraphs reduces to 
Max-Weight Perfect Matching.
\end{lemma}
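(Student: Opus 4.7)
The plan is to generalize the classical Tutte gadget that reduces Perfect $b$-Matching to Perfect Matching, making sure it preserves edge weights. Given an instance $(G,b,k)$ of Max-Weight Perfect $b$-Matching for Multigraphs with $G=(V_G,E)$ and degree function $d_G$, I would first dispose of the trivial case where $b(v) > d_G(v)$ for some $v$ (no perfect $b$-matching can exist, so output a fixed ``no'' instance). Otherwise, I construct a simple edge-weighted graph $G'$ as follows: for each $v\in V_G$, introduce $d_G(v)$ \emph{port} vertices $p_{v,1},\dots,p_{v,d_G(v)}$, one per incidence of an edge at $v$, and $d_G(v)-b(v)$ \emph{dummy} vertices $q_{v,1},\dots,q_{v,d_G(v)-b(v)}$. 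Add a weight-$0$ edge from every dummy at $v$ to every port at $v$. Then, iterating once through the multiset $E$, assign each edge $e=\{u,v\}$ a fresh unused port at $u$ and a fresh unused port at $v$, and connect them by an edge of weight $w(e)$. Because each parallel copy of an edge consumes its own pair of ports, $G'$ is simple.

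Next I would exhibit the weight-preserving bijection between perfect $b$-matchings in $G$ and perfect matchings in $G'$. Given a perfect $b$-matching $E'\subseteq E$, select the port-port edges in $G'$ corresponding to $E'$; at each vertex $v$ this leaves exactly $d_G(v)-b(v)$ ports uncovered, which then match perfectly to the $d_G(v)-b(v)$ dummies at $v$ via any perfect matching inside the complete bipartite gadget $K_{d_G(v)-b(v),\,d_G(v)-b(v)}$ joining them. Conversely, from a perfect matching $M$ in $G'$, each dummy at $v$ must be matched to a port at $v$ (dummies have no other neighbors), so exactly $d_G(v)-b(v)$ of the ports at $v$ are absorbed by dummies, forcing the remaining $b(v)$ ports to be matched by port-port edges; reading these off gives a set of original edges incident to $v$ of cardinality exactly $b(v)$, i.e., a perfect $b$-matching.

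Since dummy edges have weight $0$, both directions of the bijection preserve total weight. Consequently $G$ has a perfect $b$-matching of weight at least $k$ iff $G'$ has a perfect matching of weight at least $k$, and $G'$ has size polynomial in that of $G$ (the number of ports is $2|E|$ and the number of dummies is at most $2|E|$). This yields the desired polynomial-time many-one reduction.

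The only genuinely subtle point is the behavior with parallel edges: giving every edge copy its own dedicated port pair is what turns a multigraph instance into a \emph{simple} graph instance while keeping the bijection on the nose. The rest is routine verification: that the dummy-port complete bipartite gadget always admits the required perfect matching on whichever $d_G(v)-b(v)$ ports are left free, and that port vertices never need to match to ports of the same original vertex (they cannot, since no such edges are added). I expect these checks to be the main obstacle in writing out a clean proof, but none of them is deep.
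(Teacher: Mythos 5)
Your construction is exactly the weighted Tutte gadget the paper uses: split each vertex $v$ into $\delta(v)$ external (port) copies plus $\delta(v)-b(v)$ padding (dummy) vertices joined by a complete bipartite graph of weight-$0$ edges, give each multigraph edge its own pair of ports carrying the original weight, and observe that perfect matchings of the new graph correspond weight-for-weight to perfect $b$-matchings of $G$. The argument is correct and essentially identical to the paper's proof.
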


\def\XbToXproof{%
\begin{proof}
It is straightforward to reduce 
Perfect $b$-Matching for Multigraphs to Perfect Matching
using the construction from Tutte~\shortcite{tut:j:factor} (see, for example, Berge~\cite[Chapter 8]{ber:b:graphsbook}). 
Given multigraph $G$ and capacity function $b$, replace each vertex $v$ by a complete bipartite graph $(P_v,\{v_1, \ldots, v_{\delta(v)}\})$, where $P_v$ is a set of $\delta(v) - b(v)$ padding vertices (we assume wlog that $\delta(v) \geq b(v)$, otherwise there is no matching). 
We number the edges incident with $v$ and for each edge $e$ in $G$, if this edge is the $i$th $v$ edge and the $j$th $w$ edge in $G$, we have an edge $(v_i,w_j)$.

The same reduction works for the weighted version. We set the weight of $(v_i,w_j)$ to the weight of $e$ and we set the weight of the padding edges to 0. Call the resulting graph $G'$. 
It is easy to see that a perfect $b$-matching of $G$ of weight $k$ will give a perfect matching of $G'$ with weight $k$: Simply take all the edges corresponding to the matching of $G$ and add padding edges to make this a perfect matching. Since padding edges have weight 0, the obtained perfect matching of $G'$ has weight $k$. For the converse, note that any perfect matching of $G'$ consists of padding edges plus a set of edges corresponding to a perfect $b$-matching of $G$.~\end{proof}}
\XbToXproof

From the above lemma we have the following.

\begin{mytheorem}
\label{thm:mwpbmm-in-p}
Max-Weight and Min-Weight Perfect $b$-Matching for Multigraphs are in \p.
\end{mytheorem}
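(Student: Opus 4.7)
The plan is simply to combine Lemma~\ref{t:XbtoX} with the classical theorem of Edmonds and Johnson~\cite{edm-joh:c:matching} that Max-Weight and Min-Weight Perfect Matching on simple graphs are polynomial-time solvable. For the Max-Weight direction, Lemma~\ref{t:XbtoX} already hands us a reduction to Max-Weight Perfect Matching, so all that remains is to verify that the reduction itself runs in polynomial time (and then compose it with the Edmonds--Johnson algorithm).

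For the polynomial-time check, I would observe that in the construction from the lemma each vertex $v$ is blown up to $2\delta(v) - b(v)$ vertices (the original $\delta(v)$ edge-endpoints together with $\delta(v) - b(v)$ padding vertices), and the associated complete bipartite graph contributes at most $\delta(v)(\delta(v) - b(v))$ padding edges. Since $\delta(v) \le 2|E|$ and $b(v) \le \delta(v)$ without loss of generality, the resulting graph $G'$ has size polynomial in $|V_G| + |E|$. The edge weights are either copied verbatim from $G$ or set to $0$, so the bit-length of the weighted instance remains polynomial.

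For the Min-Weight case, the remark in the paper that ``the case for Min-Weight is analogous'' means I would repeat the proof of Lemma~\ref{t:XbtoX} verbatim with ``at least $k$'' replaced by ``at most $k$'': the padding edges still contribute $0$ to the weight of any perfect matching of $G'$, so minimum-weight perfect matchings of $G'$ correspond exactly to minimum-weight perfect $b$-matchings of $G$. Invoking Edmonds--Johnson again completes the argument.

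There is no real obstacle; the work was done in Lemma~\ref{t:XbtoX}, and this theorem is essentially the statement ``reduction plus known algorithm equals membership in \p.'' The only minor item worth being careful about is the polynomial bound on $|V_{G'}|$ and $|E(G')|$, which is immediate from the construction.
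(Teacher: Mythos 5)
Your proposal is correct and matches the paper's argument exactly: the theorem is derived by composing Lemma~\ref{t:XbtoX} with the Edmonds--Johnson polynomial-time algorithm for weighted perfect matching, with the Min-Weight case handled by the analogous reduction. The extra bookkeeping you supply (the $2\delta(v)-b(v)$ vertices and at most $\delta(v)(\delta(v)-b(v))$ padding edges per blown-up vertex, giving a polynomial-size instance) is a routine detail the paper leaves implicit, and it checks out.
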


Though many matching problems are in $\p$ (see, e.g.,~\cite{ger:b:matching}, Section 7), minor-seeming variations can make these problems hard. 
In the context of voting problems, the weighted version of control by adding voters for 2-Approval
where we limit the number of voters to add
can be viewed as a variation of a weighted matching problem~\cite{lin:c:manip-k-app}. Since weighted control by adding voters  for 2-Approval is $\np$-complete~\cite{fal-hem-hem:j:weighted-control}, so is the corresponding matching problem.
We also mention here that Exact Perfect Matching (which asks if a graph where a subset of the edges is colored red has a perfect matching with exactly a given number of red edges) introduced by Papadimitriou and Yannakakis~\shortcite{pap-yan:j:spanning-tree} and shown to be in \rp\ by Mulmuley, Vazirani, and Vazirani~\shortcite{mul-vaz-vaz:j:matching} is not known to be in~\p.

\section{2-Approval-CCRV is in P}

In the paper 
``Towards Completing the Puzzle: Complexity of Control by Replacing, Adding, and Deleting Candidates or Voters''~\cite{erd-nev-reg-rot-yan-zor:j:towards-completing}, the sole open problem is the complexity of 
constructive control by replacing voters for 2-Approval. 
Very recently, this problem was shown to be in RP (randomized polynomial time)~\cite{fit-hem:c:random-classification}. We now strengthen this upper bound to show the problem in~\p.

\begin{mytheorem}\label{thm:2-app-ccrv}
2-Approval-CCRV is in \p.
\end{mytheorem}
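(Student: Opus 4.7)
The plan is to model a 2-Approval election as a multigraph on the candidate set $C$, where each voter contributes an edge joining the two candidates they approve and the score of a candidate equals its degree. Thus $V$ induces a multigraph $G_V$ and $W$ induces a multigraph $G_W$, both on vertex set $C$, and the CCRV instance asks whether we can remove some edges from $G_V$ and add the same number of edges from $G_W$ (with that common number at most $k$) to obtain a multigraph in which $p$ has the largest degree. We will reduce this edge-swap problem to polynomially many instances of Max-Weight Perfect $b$-Matching for Multigraphs, which is in \p{} by Theorem~\ref{thm:mwpbmm-in-p}.

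The key idea is to enumerate the intended final $\score{p}$, call it $T$, over the integers in $\{0, 1, \ldots, \|V\|\}$, which gives polynomially many cases. For each $T$ we will build a multigraph $M_T$ on vertex set $C \cup \{u\}$, where $u$ is a fresh auxiliary ``slack'' vertex. The edges of $M_T$ come in three flavors: for each $v \in V$ approving $\{a,b\}$ one weight-$1$ edge $\{a,b\}$; for each $w \in W$ approving $\{c,d\}$ one weight-$0$ edge $\{c,d\}$; and for each $c \in C \setminus \{p\}$ a bundle of $T$ parallel weight-$0$ edges $\{c,u\}$. We set $b(c) = T$ for every $c \in C$ and $b(u) = |C| \cdot T - 2\|V\|$, skipping those values of $T$ for which $b(u) < 0$.

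A straightforward degree count will show that every perfect $b$-matching of $M_T$ uses exactly $\|V\|$ ``real'' edges (from $V \cup W$) together with exactly $b(u)$ slack edges. Since $p$ has no incident slack edges, its real-edge degree is exactly $T$, while every other candidate has real-edge degree at most $T$ (the remainder being absorbed by slack edges to $u$), so the real edges form a sub-multigraph of $G_V \cup G_W$ in which $p$ has the (tied) maximum degree $T$. Moreover, the weight of such a matching equals the number of $V$-edges kept, which is $\|V\|$ minus the number of voter replacements performed. Therefore the CCRV instance is a yes-instance if and only if for some enumerated $T$ the max-weight perfect $b$-matching of $M_T$ has weight at least $\|V\| - k$, and this can be decided in polynomial time by invoking Theorem~\ref{thm:mwpbmm-in-p} on each $M_T$.

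I expect the main technical obstacle to be verifying the bookkeeping of this reduction in both directions: that every perfect $b$-matching of $M_T$ really induces a valid CCRV attack (which is where the slack capacity of $T$ per non-$p$ candidate and the identity $b(u) = |C| \cdot T - 2\|V\|$ conspire to force exactly $\|V\|$ real edges and the correct degree inequalities), and conversely that every CCRV attack that makes $p$'s final score equal to $T$ can be padded with slack edges to yield a perfect $b$-matching of weight $\|V\|$ minus the number of replacements. Along the way one checks that the condition $b(u) \ge 0$ amounts to $T \ge 2\|V\|/|C|$, which any winning target must satisfy since the final multigraph has average degree $2\|V\|/|C|$, and that each slack bundle is large enough to absorb any realizable deficit (requiring $T \le 2\|V\|$, automatic in our range); once these verifications go through, the polynomial-time decision procedure is assembled.
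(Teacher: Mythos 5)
Your reduction is essentially the paper's: the same multigraph on $C$ plus one slack vertex, the same $b$-values ($b(c)$ equal to the target score for every candidate, with the slack vertex absorbing the capacity $\|C\|T-2\|V\|$), the same padding edges to the non-$p$ candidates only, and the same weighting so that a perfect $b$-matching of weight at least $\|V\|-k$ certifies a successful replacement action. The only divergence is that you enumerate the target score $T$ over $\{0,\dots,\|V\|\}$ and make polynomially many matching calls, whereas the paper fixes the single relevant value $\fsp=\|V_p\|+\min(k,\|V\|-\|V_p\|,\|W_p\|)$ by a greedy argument and makes one call; your enumeration is a correct and harmless substitute (indeed it is exactly what the paper does in its priced variant of this problem).
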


\begin{proof}

We will reduce constructive control by replacing voters for 2-Approval to Max-Weight Perfect $b$-Matching for Multigraphs, which is defined in the preliminaries.

Since Max-Weight Perfect $b$-Matching for Multigraphs is in \p\ (Theorem~\ref{thm:mwpbmm-in-p}), this immediately implies that control by replacing voters for 2-Approval is in \p.

Consider an instance of 2-Approval-CCRV with %
candidates $C$, registered voters $V$, unregistered voters $W$, preferred candidate $p$, and an integer $k \ge 0$. Let $n$ be the size of $V$.
We ask if it is possible to make $p$ a winner by replacing at most $k$ voters from $V$ by a subcollection of $W$ of the same size.
We rephrase our question as follows: Does there exist a subcollection $\widehat{V}$ of $V \cup W$ of size $n$ such that $p$ is a winner and such that $\widehat{V}$ contains at least $n - k$ voters from~$V$?

Note that we can assume that we will put as many voters approving $p$ in $\widehat{V}$ 
as possible. Let $V_p$ be the subcollection of $V$ approving $p$ and let $W_p$ be the subcollection of $W$ approving $p$. 
We put all of $V_p$ in $\widehat{V}$ and as many voters of $W_p$ as possible, keeping in mind that $\widehat{V}$ should contain at most $k$ voters from $W$ and exactly $n$ voters total. This fixes $\fsp$, the final score of $p$.\footnote{We are not claiming that $\fsp$ is the final score of $p$ for every successful control action; however, if there is a successful control action, then there is a successful control action in which the final score of $p$ is $\fsp$ and so we can restrict ourselves to control actions where the final score of $p$ is $\fsp$.}
\[\fsp = \|V_p\| + \min(k, n - \|V_p\|, \|W_p\|)\]
And we re-rephrase our question as follows:
Does there exist a subcollection $\widehat{V}$ of $V \cup W$ such that
\begin{itemize}
\item for each candidate $c$, the score of $c$ is at most $\fsp$,
\item $\widehat{V}$ contains $n$ voters,
\item $\widehat{V}$  contains at least $n - k$ voters from $V$,
\item the score of $p$ is $\fsp$.
\end{itemize}

\begin{figure}

\begin{center}
\begin{tikzpicture}[scale=0.75]
\tikzstyle{new style 0}=[fill=white, draw=black, shape=circle, draw=black]

\tikzstyle{new edge style 0}=[-, dashed]

\pgfdeclarelayer{edgelayer}
\pgfdeclarelayer{nodelayer}
\pgfsetlayers{edgelayer,nodelayer,main}
	\begin{pgfonlayer}{nodelayer}
		\node [style=new style 0] (0) at (-27, 24) {$p$};
		\node [style=new style 0] (1) at (-24, 24) {$b$};
		\node [style=new style 0] (2) at (-20, 24) {$c$};
		\node [style=new style 0] (3) at (-23, 21) {$x$};
		\node [style=new style 0] (4) at (-21.25, 27) {$a$};
		\node (5) at (-23.35, 20.50) {2};
		\node (6) at (-27.25, 24.60) {3};
		\node (7) at (-19.75, 24.60) {3};
		\node (8) at (-21.65, 27.5) {3};
		\node (9) at (-24.25, 24.60) {3};
		\node (10) at (-18.5, 24) {};
		\node (11) at (-16.25, 24) {};
		\node [style=new style 0] (12) at (-15, 24) {$p$};
		\node [style=new style 0] (13) at (-12, 24) {$b$};
		\node [style=new style 0] (14) at (-8, 24) {$c$};
		\node [style=new style 0] (15) at (-11, 21) {$x$};
		\node [style=new style 0] (16) at (-9.25, 27) {$a$};
		\node (17) at (-11.35, 20.50) {2};
		\node (18) at (-15.25, 24.60) {3};
		\node (19) at (-7.75, 24.60) {3};
		\node (20) at (-9.65, 27.5) {3};
		\node (21) at (-12.25, 24.60) {3};
		\node (22) at (-18.75, 29.5) {};
		\node (23) at (-18.75, 29.5) {$V$ consists of};
		\node (25) at (-18.5, 29) {\textbullet\ $b > c > \cdots$};
		\node (26) at (-18.5, 28.5) {\textbullet\  $a > b > \cdots$};
		\node (27) at (-18.5, 28) {\textbullet\  $a > b > \cdots$};
		\node (28) at (-18.5, 27.5) {\textbullet\  $\bm{\mathrel{p} \,> {b} \mathrel{>}\mathrel{\cdots}}$};
		\node (29) at (-18.5, 27) {\textbullet\  $\bm{\mathrel{p} \,> {b} \mathrel{>}\mathrel{\cdots}}$};
		\node (32) at (-15.1, 29.5) {};
		\node (33) at (-15.1, 29.5) {$W$ consists of};
		\node (34) at (-14.85, 29) {\textbullet\  $\bm{\mathrel{a} \,> {c} \mathrel{>}\mathrel{\cdots}}$};
		\node (35) at (-14.85, 28.5) {\textbullet\  $\bm{\mathrel{a} \,> {c} \mathrel{>}\mathrel{\cdots}}$};
		\node (36) at (-14.85, 28) {\textbullet\  $b > c > \cdots$};
		\node (37) at (-14.85, 27.5) {\textbullet\  $\bm{\mathrel{p} \,> {a} \mathrel{>}\mathrel{\cdots}}$};
	\end{pgfonlayer}
	\begin{pgfonlayer}{edgelayer}
		\draw [style=new edge style 0] (4) to (0);
		\draw [bend left=15] (0) to (1);
		\draw [bend right=15] (0) to (1);
		\draw [bend right=15] (4) to (1);
		\draw (4) to (1);
		\draw [style=new edge style 0, bend left=15, looseness=0.75] (4) to (3);
		\draw [style=new edge style 0, bend right=15, looseness=0.75] (4) to (3);
		\draw [style=new edge style 0] (4) to (3);
		\draw [style=new edge style 0, bend left=15] (4) to (2);
		\draw [style=new edge style 0] (4) to (2);
		\draw [style=new edge style 0, bend left=15, looseness=0.75] (1) to (3);
		\draw [style=new edge style 0, bend right=15] (1) to (3);
		\draw [style=new edge style 0] (1) to (3);
		\draw [bend left=15] (1) to (2);
		\draw [style=new edge style 0, bend right=15] (1) to (2);
		\draw [style=new edge style 0, bend left=15] (2) to (3);
		\draw [style=new edge style 0, bend right=15] (2) to (3);
		\draw [style=new edge style 0] (2) to (3);
		\draw [style=new edge style 1] (10.center) to (11.center);
		\draw [style=new edge style 0] (16) to (12);
		\draw [bend left=15] (12) to (13);
		\draw [bend right=15] (12) to (13);
		\draw [style=new edge style 0, bend left=15] (16) to (14);
		\draw [style=new edge style 0] (16) to (14);
		\draw [style=new edge style 0] (13) to (15);
		\draw [style=new edge style 0] (14) to (15);
	\end{pgfonlayer}
\end{tikzpicture}
\end{center}

\caption{Example of the construction from the proof of Theorem~\ref{thm:2-app-ccrv} applied to the instance of 2-Approval-CCRV from Example~\ref{ex:2-app-ccrv} (left) and the perfect $b$-matching of weight $\ge n - k = 2$ corresponding to $\widehat{V}$ (right). The collections of voters $V$ and $W$ are given with the voters in $\widehat{V}$ in bold. In the graph solid edges have weight 1, dashed edges have weight 0, and each vertex is labeled with its $b$-value.}
\label{fig:ex-app}
\end{figure}
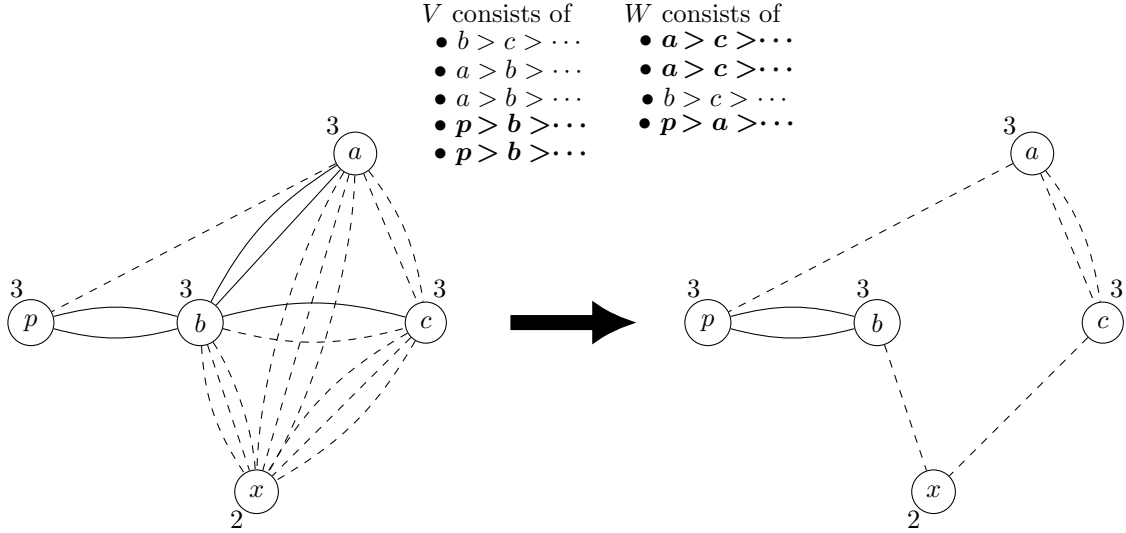

\begin{example} \label{ex:2-app-ccrv} %
Consider the instance of 2-Approval-CCRV with candidates
$\{a,b,c,p\}$,
preferred candidate $p$, $k = 3$, and registered and unregistered voters $V$ and $W$, respectively.

\begin{center}
\begin{minipage}[c]{0.1\textwidth}%
\phantom{.}
\end{minipage}
\begin{minipage}[c]{0.4\textwidth}%
\bigskip
$V$ consists of $n = 5$ voters:
\begin{itemize}
  \item One approving $b$ and $c$.
  \item Two approving $a$ and $b$.
  \item Two approving $p$ and $b$.
\end{itemize}                
\smallskip
\end{minipage}%
\hfill
\begin{minipage}[c]{0.4\textwidth}%
$W$ consists of four voters:
\begin{itemize}
  \item Two approving $a$ and $c$.
  \item One approving $b$ and $c$.
  \item One approving $p$ and $a$.
\end{itemize}
\end{minipage}

\end{center}

\noindent
Initially the scores of $p$, $a$, $b$, and $c$ are 2, 2, 5, and 1, respectively. Note that
\begin{align*}
\fsp &= \|V_p\| + \min(k,n-\|V_p\|, \|W_p\|)\\
&= 2 + \min(3,3,1)\\
&= 3.
\end{align*}

\noindent
$\widehat{V}$ consists of the following $n = 5$ voters.
\begin{itemize}
\item The two voters from $V$ approving $p$ and $b$ (note that $2 \geq n-k$).
\item Three voters from $W$: two approving $a$ and $c$; one approving $p$ and $a$.
\end{itemize}
Notice that for the election after control, with voters $\widehat{V}$, the score of each candidate is at most $\fsp = 3$ and $p$ is a winner with score $\fsp = 3$.
\end{example}

In our reduction,
a voter in $V \cup W$ approving $a$ and $b$ will correspond to an (undirected) edge $(a,b)$. We will call such an edge a ``voter-edge.'' (We will also have padding edges to make everything work.) 

In the graph corresponding to the election, $\widehat{V}$ will correspond to a matching (which will be padded to a perfect matching with padding edges) and the score of a candidate $c$ in $\widehat{V}$ will be the number of edges corresponding to $\widehat{V}$ incident with vertex $c$. We set $b(c)$ to $\fsp$, which will ensure that the number of edges corresponding to $\widehat{V}$ incident with $c$ is at most~$b(c)$.

We still need to ensure the following in our graph.
\begin{itemize}
\item There are exactly $n$ voter-edges in the matching.
\item There are at least $n-k$ voter-edges corresponding to voters in $V$ in the matching.
\item There are exactly $\fsp$ voter-edges in the matching that are incident with $p$ (this ensures that the final score of $p$ is $\fsp$ as advertised).
\end{itemize}

Recall from the discussion in the preliminaries that
for example Exact Perfect Bipartite Matching is not known to be in \p\
(and in fact the weighted version is even \np-complete since Subset Sum straightforwardly reduces to it~\cite{pap-yan:j:spanning-tree}), and so care needs to be taken with handling ``exactness'' restrictions in a matching. This is where we will be using that we are reducing to a \emph{perfect} matching problem.
We will add an extra vertex $x$
that will take up the slack, i.e., it will ensure that the perfect matching contains exactly $n$ voter-edges.
Note that since for every candidate $c$, $b(c) = \fsp$, and we want exactly $n$ voter-edges in the matching, we have a total of $\|C\|\fsp - 2n$ amount of vertex capacity left to cover. 

\medskip
\noindent
{\bf We now formally define our graph $\bm{G}$:}
\begin{description}
\item[Vertices] $V(G) = C \cup \{x\}$.
\begin{itemize}
    \item For each $c \in C$, $b(c) = fs_{p}$.
\item $b(x) = \|C\|\fsp - 2n$. If this is negative, control is not possible.
\end{itemize}
\item [Edges]\phantom{.}
\begin{itemize}
\item For each voter in $V \cup W$ that approves $a$ and $b$, add a voter-edge $(a,b)$.
\item For each candidate $c \in C-\{p\}$, add $b(c) = \fsp$ padding edges $(c, x)$. 
\end{itemize}
\item[Weights]
A perfect matching will saturate $x$ (i.e., contain $b(x) =  \|C\|\fsp - 2n$ edges incident with $x$) and will contain exactly $n$ voter-edges. We want to maximize the number of voter-edges corresponding to voters in $V$, and so we set the weight of those voter-edges to 1 and the weight
of all other edges (i.e., the voter-edges corresponding to $W$ and the padding edges) to 0.
\end{description}

\noindent
{\bf We now prove that our reduction is correct:}
\medskip

\noindent
We will show that $p$ can be made a winner by replacing at most $k$ voters if and only if $G$ has a matching of weight at least $n - k$. See Figure~\ref{fig:ex-app} for an example of this reduction applied to the instance of 2-Approval-CCRV from Example~\ref{ex:2-app-ccrv}.

If $p$ can be made a winner by replacing at most $k$ voters, then there exists
a subcollection $\widehat{V}$ of $V \cup W$ of size $n$ such that $p$ is a winner in $\widehat{V}$, $p$'s score is $\fsp$, and such that $\widehat{V}$ contains at least $n - k$ voters from $V$. Then the edges corresponding to $\widehat{V}$ are a matching of $G$, since the number of edges incident with a candidate $c$ is the score of $c$ in $\widehat{V}$, which is at most $\fsp = b(c)$. And the weight of this matching is at least $n-k$. Since there are $\fsp = b(p)$ edges incident with $p$ in the matching, we can add $b(x)$ edges between $x$ and vertices in $C - \{p\}$ to the matching, to obtain a perfect matching of weight at least $n-k$.

For the converse, suppose $G$ has a perfect matching of weight at least $n - k$. Since the matching is perfect, it contains exactly $n$ voter-edges and at least $n - k$ of these correspond to voters in $V$. Since the matching is perfect, vertex $p$ is incident to exactly $\fsp$ edges in the matching and all of these are voter-edges.
Let $\widehat{V}$ be the collection of  voters corresponding to voter-edges in the matching. It is immediate that $\widehat{V}$ is of size $n$, that
$\widehat{V}$ contains at least $n - k$ voters from $V$, and that $p$ is a winner with score $\fsp$.~\end{proof}

\section{2-Veto-\$Bribery}

We now turn to 2-Veto-\$Bribery and show that this problem is in \p. Lin~\cite[Proof of Theorem~3.8.2]{lin:thesis:elections} showed that Min-Weight $b$-Cover for Multigraphs easily reduces to 2-Veto-\$Bribery,
which implies that %
2-Veto-\$Bribery is unlikely to have a simple polynomial-time algorithm.
In fact, 2-Veto-\$Bribery was incorrectly claimed to be \np-complete by Bredereck and Talmon~\shortcite{bre-tal:j:edge-cover}. The appendix shows why their proof is incorrect.

\begin{mytheorem}
\label{thm:2-veto-pb}
2-Veto-\$Bribery is in \p.
\end{mytheorem}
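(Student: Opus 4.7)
The plan is to follow the strategy of Theorem~\ref{thm:2-app-ccrv}: reduce 2-Veto-\$Bribery to Min-Weight Perfect $b$-Matching for Multigraphs, which is in \p\ by Theorem~\ref{thm:mwpbmm-in-p}. Let $(C,V)$ be the input election with pricing $\pi$, preferred candidate $p$, and budget $k$, and let $V_p \subseteq V$ be the set of voters who veto $p$. I would iterate over every possible value $T \in \{0, 1, \ldots, \|V_p\|\}$ of the final veto count of $p$, and for each $T$ construct a multigraph $G_T$ such that the min-weight perfect $b$-matching on $G_T$ equals the minimum bribery cost among actions that leave $p$ with exactly $T$ vetoes. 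The algorithm outputs ``yes'' iff the minimum of these values across $T$ is at most $k$.

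For a fixed $T$, the relevant structural observations are: one must bribe exactly $\|V_p\| - T$ voters from $V_p$ so that $p$'s final veto count equals $T$; each candidate $c \neq p$ must end up with at least $T$ vetoes in total; and each bribed voter contributes two fresh vetoes that can be freely placed on any pair of distinct non-$p$ candidates. Accordingly, I would use vertex set $C \cup \{x\}$, where $x$ is a slack vertex, set $b(p) = T$ to fix $p$'s final veto count, and tune the $b$-values on the non-$p$ vertices together with weight-$0$ padding edges $(c,x)$ so that in any perfect $b$-matching each $c \neq p$ receives at least $T$ vetoes, with any excess absorbed by $x$. Each voter $v$ would be represented by a small gadget offering exactly two configurations: a ``keep'' configuration that contributes $v$'s original veto pair at weight $0$, and a ``bribe'' configuration that contributes a free pair of vetoes on non-$p$ candidates at weight $\pi(v)$. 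Correctness then amounts to verifying that perfect $b$-matchings of $G_T$ correspond, in a cost-preserving way, to bribery actions leaving $p$ with exactly $T$ vetoes, analogously to the argument in the proof of Theorem~\ref{thm:2-app-ccrv}.

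The main obstacle is designing the voter gadget. In 2-Approval-CCRV each voter is literally a single edge, so the encoding is immediate, but here a bribed 2-Veto voter must have \emph{both} of its vetoes reassigned for the single cost $\pi(v)$. A naive matching encoding would admit ``half-bribed'' configurations that keep one of $v$'s original vetoes and only shift the other, at cost $\pi(v)/2$, and would therefore strictly undercut the true bribery cost. Preventing these configurations appears to require a careful gadget---using auxiliary selector vertices, parallel edges (available in the multigraph setting), and $b$-values that couple the two vetoes of each voter in an all-or-nothing way---so that the feasible $b$-matchings on $G_T$ stand in cost-preserving bijection with valid bribery actions. Once that coupling is in place, the polynomial-time bound follows directly from Theorem~\ref{thm:mwpbmm-in-p}, exactly as in the 2-Approval-CCRV case.
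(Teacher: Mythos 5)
Your high-level strategy matches the paper's: reduce to Min-Weight Perfect $b$-Matching for Multigraphs, guess $p$'s final veto count, enforce that every $c \neq p$ ends with at least that many vetoes via $b$-values and weight-$0$ padding to a slack vertex, and you correctly identify that the crux is the all-or-nothing nature of bribing a 2-Veto voter. But the proposal stops exactly at that crux: the voter gadget that ``couples the two vetoes of each voter in an all-or-nothing way'' is never constructed, and it is the entire content of the theorem---without it you have a statement of the difficulty, not a proof. The paper's own cautionary remarks about Exact Perfect Matching show that ``just add a gadget'' steps in matching reductions can silently cross into intractability, so this step cannot be waved through.

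The paper avoids a per-voter gadget by a different decoupling, which is worth contrasting with your sketch. Each voter vetoing $a$ and $b$ is a single edge $(a,b)$ of weight $\pi(v)$, and the matching consists of the voters you \emph{bribe}; since that edge is atomic, ``half-bribing'' is impossible by construction rather than by gadgetry. The fresh vetoes handed out by the bribery are then handled globally, not per voter: a separate vertex $y$ carries $\ell_p + \ell'_p$ weight-$0$ bribe-edges to each $c \neq p$, the bribe-edges \emph{not} in the matching are the vetoes the bribery gives to $c$, and the observation that any distribution of $2m$ vetoes over $C - \{p\}$ with at most $m$ per candidate is realizable by $m$ two-veto votes closes the loop. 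This global accounting also forces a second guessed parameter that your enumeration omits: besides $p$'s final veto count (equivalently $\ell_p$, the number of bribed voters vetoing $p$), one must fix $\ell'_p$, the number of bribed voters \emph{not} vetoing $p$, because in a \emph{perfect} $b$-matching the $b$-values of the slack vertices depend on the total number of voter-edges that will appear in the matching. As written, iterating only over $T$ leaves those $b$-values undetermined. So the gap is twofold: the missing coupling construction, and the missing second parameter in the outer loop.
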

\begin{proof}
Consider an instance of 2-Veto-\$Bribery with
candidates $C$, voters $V$, preferred candidate $p$, and budget $k \ge 0$. For $v$ a voter, we denote the price of $v$ by $\pi(v)$. If there are at most two candidates the problem is trivial and so we assume we have at least three candidates.
We will be using that Min-Weight Perfect $b$-matching for Multigraphs is in \p~(Theorem~\ref{thm:mwpbmm-in-p}).
In our reduction we will let a voter $v$ who vetoes $a$
and $b$ correspond to an edge $(a,b)$ of weight $\pi(v)$, the price of $v$.
We will call such an edge a voter-edge. We will argue about 2-Veto elections in terms of the number of vetoes each candidate gets (this way each voter affects only two candidates
and these numbers are nonnegative).
Our bribery instance is positive if and only if there is a bribery within budget $k$ such that after bribery the number of vetoes for $p$ is \emph{at most} the number of vetoes for $c$ for every other candidate $c$. 

Let $V_p$ be the collection of voters in $V$ that veto $p$ and let $V'_p$ be the collection of voters in $V$ that do not veto $p$, i.e., $V'_p = V - V_p$. For each $\ell_p, \ell'_p$ such that $0 \leq \ell_p \leq \|V_p\|$ and $0 \leq \ell'_p \leq \|V'_p\|$ we determine whether there exists a successful bribery where $\ell_p$ voters in $V_p$ and $\ell'_p$ voters in $V'_p$ are bribed. Note there are polynomially many such
pairs $\ell_p,\ell'_p$.
Let $\fvp$ be the final number of vetoes for $p$ (i.e., after bribery). Without loss of generality, we assume that we never bribe a voter to veto $p$. So $\fvp = v_p - \ell_p$, where $v_p$ is the number of vetoes for $p$ in $V$.
After bribery, we want each candidate $c \neq p$ to have at least $\fvp$ vetoes. We already mentioned that each voter $v$ that vetoes $a$ and $b$ will correspond to an edge $(a,b)$ of weight $\pi(v)$. We want the collection of voters we bribe to correspond to a matching, so if an edge $(a,b)$ is in the matching, that corresponds to the corresponding voter---who vetoes $a$ and $b$---being bribed, and so the number of vetoes for $a$ and $b$ decreases by 1. 
The obvious thing to do would be to set $b(c)$ to the number of vetoes for $c$ that can be deleted, i.e., $b(c) = v_c - \fvp$, where $v_c$ is the number of vetoes for $c$ in $V$. %
However, things are much more complicated here, since we need to ensure that our matching contains exactly $\ell_p$ voter-edges corresponding to voters in $V_p$ and exactly $\ell'_p$ voter-edges corresponding to voters in $V'_p$, and that the $\ell_p + \ell'_p$ bribed voters each veto two candidates.

At first glance it may seem that we should just simply
bribe only voters that veto $p$ and so $\ell'_p$
would be unnecessary. However,
since our voters have
prices this is not the case. The most obvious
example would be when all voters vetoing $p$ have price
greater than the budget and $p$ is not initially a winner.
Additionally, it is not difficult to construct
scenarios where bribery is possible when voters not
vetoing $p$ are bribed and it is not possible for $p$
to be made a winner by bribing only voters vetoing $p$
(and the prices are each within the budget).
See Example~\ref{ex:2veto-votes} below.

\begin{figure} 
\begin{center}
\begin{tikzpicture}[scale=0.6]
\tikzstyle{new style 0}=[fill=white, draw=black, shape=circle, draw=black, fill=white]
\tikzstyle{graynode}=[draw={rgb,255: red,191; green,191; blue,191}]

\definecolor{mygray}{RGB}{191,191,191}
\tikzstyle{new edge style 0}=[-, dashed]
\tikzstyle{new edge style 2}=[->,line width=2mm,-{Latex[width=0pt 3, length=6mm]}]
\usetikzlibrary{arrows.meta}
\tikzstyle{graydash}=[-, dashed, draw={rgb,255: red,191; green,191; blue,191}]
\tikzstyle{grayedge}=[-, draw={rgb,255: red,191; green,191; blue,191}]
\pgfdeclarelayer{edgelayer}
\pgfdeclarelayer{nodelayer}
\pgfsetlayers{edgelayer,nodelayer,main}
	\begin{pgfonlayer}{nodelayer}
		\node [style=new style 0] (0) at (-27.75, 24) {$a$};
		\node [style=new style 0] (1) at (-24, 24) {$b$};
		\node [style=new style 0] (2) at (-19.25, 24) {$c$};
		\node [style=new style 0] (3) at (-21.5, 22.25) {$x$};
		\node [style=new style 0] (4) at (-20.5, 27) {$p$};
		\node (5) at (-23.85, 19.45) {3};
		\node (6) at (-28, 24.75) {0};
		\node (7) at (-18.55, 23.95) {5};
		\node (8) at (-20.9, 27.7) {2};
		\node (9) at (-24.25, 24.75) {4};
		\node (10) at (-17, 24) {};
		\node (11) at (-14.75, 24) {};
		\node (12) at (-25.5, 24.25) {1};
		\node (13) at (-21.5, 24.6) {1};
		\node (14) at (-21.5, 23.95) {1};
		\node (15) at (-23, 25.7) {1};
		\node (16) at (-22.25, 25.5) {1};
		\node (17) at (-21.0, 25.5) {2};
		\node (18) at (-20.15, 25.5) {2};
		\node (19) at (-19.40, 25.5) {2};
		\node (20) at (-18.60, 25.5) {2};
		\node [style=new style 0] (21) at (-24, 20.25) {$y$};
		\node [style=new style 0] (22) at (-13.5, 24) {$a$};
		\node [style=new style 0] (23) at (-9.75, 24) {$b$};
		\node [style=new style 0] (24) at (-5, 24) {$c$};
		\node [style=new style 0] (25) at (-7.25, 22.25) {$x$};
		\node [style=new style 0] (26) at (-6.25, 27) {$p$};
		\node (27) at (-9.6, 19.45) {3};
		\node (28) at (-13.75, 24.75) {0};
		\node (29) at (-4.30, 23.95) {5};
		\node (30) at (-6.65, 27.7) {2};
		\node (31) at (-10, 24.75) {4};
		\node (32) at (-11.25, 24.25) {\textcolor{mygray}{1}};
		\node (33) at (-7.25, 24.6) {1};
		\node (34) at (-7.25, 23.95) {\textcolor{mygray}{1}};
		\node (35) at (-8.75, 25.7) {1};
		\node (36) at (-8, 25.5) {1};
		\node (37) at (-6.75, 25.5) {\textcolor{mygray}{2}};
		\node (38) at (-5.9, 25.5) {\textcolor{mygray}{2}};
		\node (39) at (-5.15, 25.5) {\textcolor{mygray}{2}};
		\node (40) at (-4.35, 25.5) {\textcolor{mygray}{2}};
		\node [style=new style 0] (41) at (-9.75, 20.25) {$y$};
	\end{pgfonlayer}
	\begin{pgfonlayer}{edgelayer}
		\draw (0) to (1);
		\draw [bend right=15] (4) to (1);
		\draw [bend left=15] (4) to (1);
		\draw [style=new edge style 0, bend left=15, looseness=0.75] (1) to (3);
		\draw [style=new edge style 0, bend right=15] (1) to (3);
		\draw [style=new edge style 0] (1) to (3);
		\draw [bend left=12] (1) to (2);
		\draw [style=new edge style 0, bend left=24] (2) to (3);
		\draw [style=new edge style 0, bend right, looseness=0.75] (2) to (3);
		\draw [style=new edge style 0] (2) to (3);
		\draw [style=new edge style 1] (10.center) to (11.center);
		\draw [bend right=10] (1) to (2);
		\draw [style=new edge style 0, bend right] (1) to (3);
		\draw [style=new edge style 0, bend left=15] (2) to (3);
		\draw [style=new edge style 0, bend right=15, looseness=0.75] (2) to (3);
		\draw [bend right=45] (4) to (2);
		\draw [bend left=75, looseness=1.25] (4) to (2);
		\draw (4) to (2);
		\draw [bend left=45, looseness=0.75] (4) to (2);
		\draw [style=new edge style 0, bend right=15] (1) to (21);
		\draw [style=new edge style 0, bend left=15, looseness=0.75] (1) to (21);
		\draw [style=new edge style 0] (1) to (21);
		\draw [style=new edge style 0, bend right=45, looseness=1.25] (0) to (21);
		\draw [style=new edge style 0, bend right=45] (0) to (21);
		\draw [style=new edge style 0, bend right=45, looseness=1.50] (0) to (21);
		\draw [style=new edge style 0, bend left=40] (2) to (21);
		\draw [style=new edge style 0, bend left=30, looseness=1.25] (2) to (21);
		\draw [style=new edge style 0, bend left=45] (2) to (21);
		\draw [style=grayedge] (22) to (23);
		\draw [bend right=15] (26) to (23);
		\draw [bend left=15] (26) to (23);
		\draw [style=graydash, bend left=15, looseness=0.75] (23) to (25);
		\draw [style=graydash, bend right=15] (23) to (25);
		\draw [style=graydash] (23) to (25);
		\draw [bend left=12] (23) to (24);
		\draw [style=new edge style 0, bend left=24] (24) to (25);
		\draw [style=graydash, bend right, looseness=0.75] (24) to (25);
		\draw [style=graydash] (24) to (25);
		\draw [style=grayedge,bend right=10] (23) to (24);
		\draw [style=new edge style 0, bend right] (23) to (25);
		\draw [style=graydash, bend left=15] (24) to (25);
		\draw [style=graydash, bend right=15, looseness=0.75] (24) to (25);
		\draw [style=grayedge, bend right=45] (26) to (24);
		\draw [style=grayedge,bend left=75, looseness=1.25] (26) to (24);
		\draw [style=grayedge] (26) to (24);
		\draw [style=grayedge,bend left=45, looseness=0.75] (26) to (24);
		\draw [style=graydash, bend right=15] (23) to (41);
		\draw [style=graydash, bend left=15, looseness=0.75] (23) to (41);
		\draw [style=graydash] (23) to (41);
		\draw [style=graydash, bend right=45, looseness=1.25] (22) to (41);
		\draw [style=graydash, bend right=45] (22) to (41);
		\draw [style=graydash, bend right=45, looseness=1.50] (22) to (41);
		\draw [style=new edge style 0, bend left=40] (24) to (41);
		\draw [style=new edge style 0, bend left=30, looseness=1.25] (24) to (41);
		\draw [style=new edge style 0, bend left=45] (24) to (41);
	\end{pgfonlayer}
\end{tikzpicture}
\end{center}
\caption{Example of the construction from the proof of Theorem~\ref{thm:2-veto-pb} applied to the instance of 2-Veto-\$Bribery from Example~\ref{ex:2veto-votes} (left) and the perfect $b$-matching of weight $\le k = 3$ corresponding to $\widehat{V}$ (right). In the graph solid edges are labeled with their corresponding weight, dashed edges have weight 0, and each vertex is labeled with its $b$-value.
Recall that a perfect matching corresponds to the complement of the final bribed election. The ``grayed out'' edges on the right show the final bribed election, which has four voters vetoing $p$ and $c$, one voter vetoing $a$ and $b$, one voter vetoing $b$ and $c$, and three bribe vetoes for $a$ and three bribes vetoes for $b$ (corresponding to the three $(a,y)$ edges and the three $(b,y)$ edges).}
\label{fig:veto-bribery}
\end{figure}
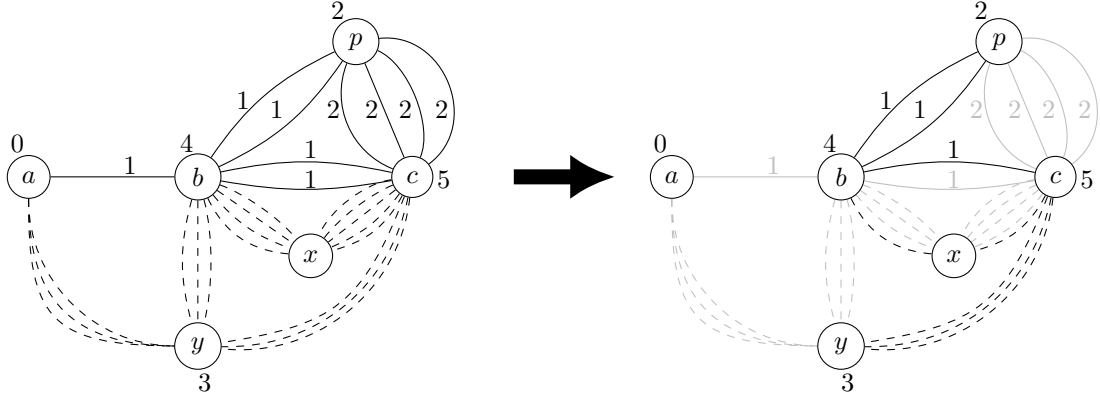

\begin{example}\label{ex:2veto-votes}
Consider the instance of 2-Veto-\$Bribery
with candidates $\{a,b,c,p\}$, preferred candidate $p$, budget $k = 3$, and $V$ consists of the following voters.

\begin{itemize}
  \item One voter vetoing $a$ and $b$ with price 1.
  
  \item Two voters vetoing $b$ and $c$ with price 1.
  
  \item Two voters vetoing $b$ and $p$ with price 1.
  
  \item Four voters vetoing $c$ and $p$ with price 2.
\end{itemize}
From the voters in $V$, $v_a = 1$, $v_b = 5$, $v_c = 6$, and $v_p = 6$.

There is a bribery where $p$ wins where $\ell_p = 2$ and $\ell'_p = 1$ (i.e., that bribes two voters that veto $p$ and one that does not). This sets $\fvp = 6 - 2 = 4$. Bribe the two voters vetoing $b$ and $p$ and one of the voters vetoing $b$ and $c$ to each veto $a$ and $b$. After bribery $a$ and $p$ have 4 vetoes, $b$ and $c$ have 5 vetoes, and so $p$ is a winner.
\end{example}

We rephrase our problem until we are in a matching-friendly form. 

Bribery is possible if and only if there exists a subcollection of voters $\widehat{V} \subseteq V$ (the collection of voters we bribe) such that:
\begin{enumerate}
    \item $\widehat{V}$ contains exactly $\ell_p$ voters of $V_p$ (voters vetoing $p$),
    
    \item $\widehat{V}$ contains exactly $\ell'_p$ voters of $V'_p$ (voters not vetoing~$p$),
    
    \item $\pi(\widehat{V}) \leq k$, and
    
    \item there exists a collection of $\ell_p + \ell'_p$ votes (the votes of the bribed voters after bribery)
    such that for all $c$, the number of vetoes for $c$ after bribery is at least $\fvp$.
\end{enumerate}

Writing $v_c$ for the number of vetoes for $c$ in $V$ and $\widehat{v}_c$ for the number of vetoes for $c$ in $\widehat{V}$, we can rewrite the fourth requirement above as follows.
\begin{enumerate}
\item[4.] There exist integers $bv_c$ for each $c \in C - \{p\}$ (these will be the number of vetoes that the bribery gives to~$c$), such that
\begin{itemize}
\item $v_c - \widehat{v}_c + bv_c\ \text{(the number of vetoes for $c$ after bribery)} \geq \fvp$,
\item $0 \leq bv_c \leq \ell_p + \ell'_p$ (each bribed voter can give only one veto to $c$), and
\item $\sum_{c \in C - \{p\}} bv_c = 2(\ell_p + \ell'_p)$
(the bribery gives $2(\ell_p + \ell'_p)$ vetoes total to %
candidates $C - \{p\}$).
\end{itemize}
\end{enumerate}

Note that the maximum number of vetoes for candidate $c$ after bribery is $v_c + \ell_p + \ell'_p$ (this happens when we do not bribe voters that veto $c$ and all bribed voters are bribed to veto~$c$). If this quantity is less than $\fvp$ for some $c$, bribery is not possible for this choice of $\ell_p$ and $\ell'_p$. Otherwise, $b(c) = v_c + \ell_p + \ell'_p - \fvp$ is the number of vetoes $c$ can lose from the maximum number of possible vetoes for $c$. 

\medskip
\noindent
{\bf We now define our graph $\bm{G}$:}
\medskip

\noindent
Recall that we are reducing to Min-Weight Perfect $b$-Matching for Multigraphs and that we have at least three candidates. 
An example of this construction
applied to the instance of 2-Veto-\$Bribery from Example~\ref{ex:2veto-votes} is presented in Figure~\ref{fig:veto-bribery}.
Note that in order for $G$ to have a $b$-matching, all $b$-values must be nonnegative.

\begin{itemize}
    \item $V(G) = C \cup \{x,y\}$. ($x$ will ensure that the perfect matching contains exactly $\ell'_p$ edges from $V'_p$; this is similar to the role of $x$ in the construction for 2-Approval-CCRV. $y$ will handle the vetoes given by the bribery.)
 \item For each voter in $v \in V$ vetoing $a$ and $b$, add a voter-edge $(a,b)$ with weight $\pi(v)$. 
\item $b(p) = \ell_p$. This will ensure that a perfect matching contains exactly $\ell_p$ voter-edges corresponding to voters that veto $p$.
\item For each $c \in C - \{p\}$, $b(c) = v_c + \ell_p + \ell'_p - \fvp$.
\item  For each $c \in C - \{p\}$, add $(\ell_p + \ell'_p)$ bribe-edges $(c,y)$ with weight 0. This is the maximum number of vetoes that $c$ can receive from the bribery. The bribe-edges \emph{not} in the perfect matching will correspond to the vetoes that the bribery gives to $c$.\footnote{At this point one might wonder why we are having the {\em complement} of the matching correspond to the final election rather than the matching itself. The reason is that we would then be arguing about covers (instead of matchings) which would have to be padded to perfect covers. Padding covers is a little harder than padding matchings; with matchings we just add edges, but with covers we also need to update the $b$-values.} Since the bribery gives a total of $2(\ell_p + \ell'_p)$ vetoes
and we have a total of $\|C-\{p\}\|(\ell_p + \ell'_p)$ bribe-edges, we 
set $$b(y) = (\|C\|-3)(\ell_p + \ell'_p).$$

\item For each $c \in C-\{p\}$, add $b(c)$ padding edges $(c,x)$ of weight 0.
We want to ensure that a perfect matching contains exactly $\ell'_p$ voter-edges corresponding to voters that do not veto $p$. The excess of $b$-values is 
\renewcommand{\labelitemii}{$\circ$}
\begin{itemize}
    \item $\sum_{c \in C - \{p\}} b(c)$ %
    \item minus $\ell_p$ (one for each of the  $\ell_p$ voter-edges incident with $p$; note that each such edge is incident with one $c \in C - \{p\})$
    \item minus $2 \ell'_p$ (two for each of the $\ell'_p$ voter-edges not incident with~$p$)
    \item minus $b(y)$ (the number of bribe-edges in the perfect matching; note that each such edge is incident with one $c \in C - \{p\}$).
    \end{itemize}
We set $b(x)$ to this value.
\end{itemize}

\noindent
{\bf We now prove that our reduction is correct:}
\medskip

\noindent
We need to show that there is a successful bribery that bribes $\ell_p$ voters from $V_p$
and $\ell'_p$ voters from $V'_p$ of cost $\le k$ if and only if $G$ has perfect $b$-matching
of weight $\le k$. Note that $G$ is independent of $k$, and so it suffices to prove the following.

\begin{quote}
For all $k$ there is a successful bribery that bribes $\ell_p$ voters from $V_p$ and 
$\ell'_p$ voters from $V'_p$ of cost $k$ if and only if $G$ has a perfect $b$-matching of weight $k$. 
\end{quote}

If there is a successful bribery of cost $k$
that bribes $\ell_p$ voters from $V_p$ and $\ell'_p$ voters from $V'_p$, let $\widehat{V}$ be the collection of bribed voters, and let $\fvp$, $v_c$, and $\hat{v}_c$ be as defined above.
For $c \in C - \{p\}$, let $bv_c$ be the number of vetoes the bribery gives to $c$. Without loss of generality, assume that the bribery does not give vetoes to $p$. Recall that we have at least three candidates.

We will show that the following set of edges $E'$ is a perfect $b$-matching of $G$ of weight $k$.

\begin{itemize}
\item the voter-edges corresponding to $\widehat{V}$ (since these are the only nonzero weight edges, the matching will have total weight $\pi(\widehat{V}) = k$),
\item for each $c \in C - \{p\}$, $(\ell_p + \ell'_p) - bv_c$
bribe-edges $(c,y)$ representing the vetoes not given to $c$ by the bribery,
\item for each $c \in C - \{p\}$, padding edges $(c,x)$ to meet the corresponding $b$-value. Since $c$ is incident with $\widehat{v}_c$ voter-edges and $(\ell_p + \ell'_p) - bv_c$ bribe-edges, we need $b(c) - \widehat{v}_c - ((\ell_p + \ell'_p) - bv_c)$ padding edges.
\end{itemize}

We now show that $E'$ is a perfect $b$-matching of $G$. Part of showing this is making sure that all the numbers of edges are nonnegative.

\begin{itemize}
\item The edges in $E'$ incident with $p$ are exactly the $\ell_p$ voter-edges in $\widehat{V}$ that veto $p$ and $b(p) = \ell_p \geq 0$.

\item For
$c \in C - \{p\}$, the definition ensures that the number of voter-edges + bribe-edges + padding edges in $E'$ incident with $c$ add up to $b(c)$. But we have to ensure that 
these numbers are all nonnegative. This is immediate for the number of voter-edges incident with $c$ and for the number of bribe-edges incident with $c$, and so it remains to show that the number of padding edges incident with $c$, $b(c) - \hat{v}_c - ((\ell_p + \ell'_p) - bv_c)$, is nonnegative. Since $b(c) = v_c + \ell_p + \ell'_p - \fvp$, we are looking at the quantity $v_c + \ell_p + \ell'_p - \fvp 
- \hat{v}_c - ((\ell_p + \ell'_p) - bv_c) = v_c - \fvp - \hat{v}_c + bv_c$. The number of vetoes for $c$ after bribery is $v_c - \hat{v}_c + bv_c$, and this number is $\geq \fvp$, since $p$ is a winner. It follows that the number of padding edges incident with $c$ in $E'$ is nonnegative.
\item
The number of edges in $E'$ incident with $y$ is
\begin{align*}
\sum_{c \in C - \{p\}} ((\ell_p + \ell'_p) - bv_c) \ &=\sum_{c \in C - \{p\}} (\ell_p + \ell'_p) - \sum_{c \in C - \{p\}}bv_c\\
&=\|C-\{p\}\|(\ell_p + \ell'_p) - 2(\ell_p + \ell'_p)\\
&=b(y).
\end{align*}
This is nonnegative, since we have at least three candidates.

\item
The number of edges in $E'$ incident with $x$ is the number of padding edges in $E'$, which is nonnegative by the argument for $c \in C - \{p\}$ above.
This quantity is
$$\sum_{c \in C - \{p\}} (b(c) - \hat{v}_c - ((\ell_p + \ell'_p) - bv_c)) =
\sum_{c \in C - \{p\}} b(c) -
\sum_{c \in C - \{p\}} \hat{v}_c - 
\sum_{c \in C - \{p\}}((\ell_p + \ell'_p) - bv_c).$$

Since
$$\sum_{c \in C - \{p\}} \hat{v}_c = \ell_p + 2\ell'_p\ \text{ and} 
\sum_{c \in C - \{p\}}((\ell_p + \ell'_p) - bv_c)
= b(y),$$
this is exactly $b(x)$.
\end{itemize}

For the converse, suppose $G$ has a perfect matching of weight $k$.
Let $\widehat{V}$ be the collection of voters corresponding to the voter-edges in the matching.
We will show that we can make $p$ a winner by bribing the voters in $\widehat{V}$. Note that the cost of this bribery is $k$. 
For $c \in C - \{p\}$, let $bv_c$ be the number of bribe-edges incident with $c$ that are not in the perfect matching. We will show that 
\begin{enumerate}
\item If the bribery gives $bv_c$ vetoes to each candidate $c \in C-\{p\}$ and no vetoes to $p$, then $p$ is a winner, and 
\item there are $\ell_p + \ell'_p$ votes that give $bv_c$ vetoes to each candidate $c \in C - \{p\}$ and no vetoes to $p$.
\end{enumerate}
The second item follows immediately from the fact that $\sum_{c \in C -\{p\}} bv_c = 2(\ell_p + \ell'_p)$ (since there are exactly $2(\ell_p + \ell'_p)$ edges incident with $y$ that are not in the matching), and for each $c \in C - \{p\}, bv_c \leq \ell_p + \ell'_p$ (since there are $\ell_p + \ell'_p$ bribe-edges incident with $c$ in $G$).

For the first item, we need to show that for each candidate $c \in C - \{p\}$,
$v_c - \hat{v}_c + bv_c$ (the number of vetoes for $c$ after bribery) is at least $\fvp$.

From the matching, we have that for each candidate $c \in C - \{p\}$,
$\hat{v}_c$ (the number of voter-edges incident with $c$ in the matching)
$+ (\ell_p + \ell'_p - bv_c)$ (the number of bribe-edges incident with $c$ in the matching)
$ \leq b(c) = v_c + \ell_p + \ell'_p - \fvp$.
This implies that $\hat{v}_c - bv_c \leq v_c - \fvp,$
which immediately implies that
$\fvp \le v_c - \hat{v}_c + bv_c,$ and so $p$ is a winner.~\end{proof}

\section{3-Veto-\$Bribery is NP-complete}

It should come as no surprise that 3-Veto-\$Bribery is NP-complete. This was claimed in Bredereck and Talmon~\shortcite{bre-tal:j:edge-cover}, but with an incorrect proof (a simple modification of their incorrect NP-hardness proof for 2-Veto-\$Bribery).

\begin{mytheorem}\label{thm:3-veto-pb}
3-Veto-\$Bribery is \np-complete.
\end{mytheorem}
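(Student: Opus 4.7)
The proof splits into $\np$ membership and $\np$-hardness. Membership in $\np$ is immediate: given a proposed subcollection $V' \subseteq V$ of voters to bribe together with a new linear order over $C$ for each $v \in V'$, one checks in polynomial time that $\sum_{v \in V'} \pi(v) \le k$ and that $p$ is a winner of the 3-Veto election obtained by replacing the votes of $V'$.

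For $\np$-hardness I would reduce from an $\np$-complete problem whose combinatorial structure matches the ``3 vetoes per voter'' structure of 3-Veto, most naturally Exact Cover by 3-Sets (X3C): given $(U, \mathcal{S})$ with $|U| = 3q$ and $\mathcal{S} = \{S_1, \ldots, S_n\}$ a family of 3-element subsets of $U$, decide whether some size-$q$ subfamily of $\mathcal{S}$ partitions $U$. The plan is to build a 3-Veto election whose candidate set is $U \cup \{p\}$ augmented by a handful of dummy candidates that absorb ``harmless'' new vetoes; to install one bribable set-voter of price $1$ per $S_i$ whose three vetoes are precisely the three elements of $S_i$; and to use additional unbribable (prohibitive-price) baseline voters to pin the initial veto counts and to give the dummies the right amount of slack. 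The budget is chosen so that essentially $n - q$ bribes are affordable, and the initial counts are engineered so that the only way $p$ can be made a winner is to (i) bribe exactly $n - q$ set-voters, (ii) direct the $3(n-q)$ new vetoes onto the dummies so that the net effect on $U$ is purely the removal of the $S_i$ vetoes of the bribed voters, and (iii) leave behind $q$ unbribed set-voters that together cover each element of $U$ exactly once -- that is, an exact cover drawn from $\mathcal{S}$.

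The main obstacle is that bribery allows each bribed voter's new vote to be any linear order, so the construction cannot directly force which 3-element set a bribed voter re-vetoes; it can only make ``bad'' new vetoes (landing on an element of $U$ or on $p$) either infeasible under the budget or incompatible with $p$'s winning condition. Pushing this through requires a careful double-accounting of final veto counts: the ceiling imposed on $p$'s final count by the baseline voters, the per-element lower bound forced by each $u \in U$ needing to at least tie $p$, and the capacity constraints on the dummies must jointly pin the bribery down to the choice of which $n - q$ set-voters are bribed, with the remaining freedom being exactly the freedom to choose an exact cover. The forward direction of the reduction -- an exact cover yields a successful bribery by bribing its complement and dumping the new vetoes on the dummies -- is straightforward; the converse, ruling out every structurally different successful bribery strategy, is the technical heart of the proof and mirrors the kind of argument where Bredereck and Talmon's earlier attempt (with only 2 candidates vetoed per voter) broke down.
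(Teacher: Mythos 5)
Your overall template is the same as the paper's: reduce from an exact-cover-by-3-sets problem, give each set a price-1 voter vetoing its three elements, pin the rest of the election with prohibitively priced voters, and add dummy candidates to absorb the redirected vetoes of the bribed voters. The two places where you diverge are that you reduce from plain X3C rather than the restricted version RX3C (each element in exactly three sets), and that in your accounting the exact cover is the set of \emph{unbribed} voters (you bribe the complement, $n-q$ voters) rather than the bribed ones.

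The genuine gap is that the proof is not actually carried out: you never fix the numbers (how many dummies, what the expensive voters look like, what $p$'s final veto count is pinned to, what the budget is), and you explicitly defer the converse direction as ``the technical heart.'' That converse is precisely what a referee needs to see, and the point of the paper's parameter choices is that they make it a three-line counting argument rather than a delicate one. Concretely, the paper uses RX3C so that every element of $B$ starts with exactly three vetoes while $p$ and two padding candidates are pinned at two vetoes by price-$(k+1)$ voters; there are $3k$ dummies, each with one veto from a price-$(k+1)$ voter, so each dummy must \emph{gain} a veto to reach $p$'s count of two. Since each bribed voter supplies only three new vetoes and the budget allows at most $k$ bribes, all $3k$ new vetoes are forced onto the dummies, so no element of $B$ gains a veto; and since each element may lose at most one veto (else it falls below two), the $k$ bribed sets cover each element at most once, hence exactly once. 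Your ``bribe the complement'' variant can be made to work (with $3(n-q)$ dummies each needing one new veto, forcing exactly $n-q$ bribes with all new vetoes on dummies, so the $q$ surviving sets must cover all $3q$ elements and hence cover each exactly once -- and this version does not even need the RX3C restriction), but as written the construction is underdetermined and the forcing argument is asserted rather than proved. Also note the NP-hardness source matters for correctness of the citation: the paper uses Gonzalez's RX3C; if you use plain X3C you should cite Karp.
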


\begin{proof}
We give a straightforward reduction from Restricted Exact Cover by 3-Sets to our bribery problem.
The well-known \np-complete problem of Exact Cover by 3-Sets (X3C) asks, given a finite collection of 3-element subsets of some finite set $B$, whether the collection has an exact cover, i.e., whether 
there exists a subcollection such that each element of $B$ occurs exactly once~\cite{kar:b:reducibilities}.
We reduce from Restricted Exact Cover by 3-Sets (RX3C), which is the restriction of X3C where each element of $B$ occurs in exactly three subsets. This restricted problem is still NP-complete~\cite{gon:j:restricted-exact-cover}.

\prob{Restricted Exact Cover by 3-Sets (RX3C)}%
{Given a set $B = \{b_1, \dots, b_{3k}\}$ and a collection  ${\cal S} = \{S_1, \dots, S_n\}$ of 3-element subsets of $B$ where each $b \in B$ occurs in exactly three subsets $S_i$ (so $n = 3k$).}%
{Does there exist a subcollection ${\cal S'}$ of ${\cal S}$ that forms an exact cover of $B$?}

Given an instance of RX3C, $B = \{b_1, \dots, b_{3k}\}$ and ${\cal S} = \{S_1, \dots, S_n\}$, we construct an instance of 3-Veto-\$Bribery in the following way.

The set of candidates consists of $B$, $p$, two padding candidates $p_1$ and $p_2$, and $3k$ dummy candidates $d_1, \dots, d_{3k}$. The preferred candidate is $p$ and the budget is $k$.
The voters are defined as follows.
\begin{itemize}
\item For each set $S \in {\cal S}$ with $S = \{x,y,z\}$
we have one voter that vetoes $x$, $y$, and $z$, with price 1.
\item We have two voters vetoing  $p_1$, $p_2$, and $p$, each with price $k+1$.
\item We have $k$ voters, each with price $k+1$, with votes that ensure each of the $d_i$ candidates receive one veto.
\end{itemize}
Note that each dummy candidate has one veto, $p$ and the two padding candidates have two vetoes, and each candidate in $B$ has three vetoes.

If $p$ can be made a winner by bribing a subcollection of voters with total price at most $k$ %
we will show that there is an exact cover ${\cal S}'$ of ${\cal S}$.
Only the voters with votes corresponding to sets in ${\cal S}$ can be bribed,
since all other voters have prices greater than the budget. This means that
$p$ will have at least two vetoes in any bribed election, and a successful bribery would need each of the $3k$ dummy candidates $d_i$ to gain a veto and each $b_i$ candidate can lose at most one veto. Therefore voters  corresponding to sets in ${\cal S}$ will correspond to an exact cover.

For the converse, suppose there exists an exact cover ${\cal S}'$ of ${\cal S}$. Since each $b \in B$ occurs exactly three times in ${\cal S}$ we know $\|{\cal S}'\| = k$.
We can construct a successful bribery by bribing the $k$  voters corresponding to ${\cal S'}$ to veto the
$3k$ dummy candidates $d_i$.
Since ${\cal S'}$ is an exact cover, each $b_i$ candidate will
lose exactly one veto, and so $p$ is a winner.~\end{proof}

\section{Pushing the Boundary}

As we've seen in the previous section, weighted matching turned out to be a useful tool to solve some open control and bribery problems. These problems inherit their complicated algorithms from weighted matching. In this section we discuss how far we can push this technique; can we provide polynomial-time algorithms for more general problems?

It turns out that we can. For the sake of such an example we will show that even after adding
prices to the registered and unregistered voters in control by replacing voters for 2-Approval, this problem will
remain in \p.\footnote{Though it may seem from our results that control problems for 2-Approval will all be in \p\ this is not the case. Control by adding and control by deleting candidates are each \np-complete~\cite{lin:c:manip-k-app,lin:thesis:elections}, and weighted control by deleting voters is \np-complete~\cite{fal-hem-hem:j:weighted-control}.}
Priced control was introduced by Miasko and Faliszewski~\shortcite{mia-fal:j:priced-control} for control by adding and deleting voters and candidates. 
In priced control, each control action has a price and the question is whether it is possible to obtain the desired result by control within a budget.
We can extend this notion to control by replacing voters, by having each registered voter $v \in V$ have a price $\pi(v)$,
each unregistered voter $w \in W$ have price $\pi(w)$,
and letting $k$ be the budget.

We will now explain how to modify the proof of Theorem~\ref{thm:2-app-ccrv} using some of the ideas from the proof of Theorem~\ref{thm:2-veto-pb} to also work for priced control by replacing voters for 2-Approval. %
Unlike in the unpriced case, we do not know what $\fsp$ will be. And so, similarly as in the proof of Theorem~\ref{thm:2-veto-pb}, we will brute-force over all (polynomially many) possible values of $\fsp$. For each value of $\fsp$, $0 \le \fsp \le \|V\|$, we look at the graph defined in the proof of Theorem~\ref{thm:2-app-ccrv}. The only difference will be the weights. All we have to do is to set the weight of each voter-edge that corresponds to a voter $v \in V$ to $\pi(v)$ instead of 1 and each voter-edge that corresponds to a voter $w \in W$ to $-\pi(w)$ instead of 0.
The same argument as in the proof of Theorem~\ref{thm:2-app-ccrv} shows that there is a successful control action within budget $k$ if and only if $G$ has a perfect matching of weight $\pi(V) - k$.

 An immediate corollary of the above is that this implies that 2-Approval-\$Bribery is in \p\ (as Bredereck et al.~\shortcite{bre-gol-woe:t:2-approval} state).
 We can solve 2-Approval-\$Bribery using the result just stated for priced control by replacing voters: %
  The registered voters (including price) will be the $n$ voters from the 2-Approval-\$Bribery instance that we want to solve, the budget will be same,
 and the unregistered voters will consist of ``all possible votes'' that we can bribe to, each with price 0, i.e., for each pair of candidates $a,b$, the collection of unregistered voters contains $n$ voters approving $a$ and $b$, each with price 0.

\section{Conclusion} %

We showed that by using weighted matching we were able to solve the sole open
problem from the comprehensive study of
control by Erd{\'e}lyi et al.~\shortcite{erd-nev-reg-rot-yan-zor:j:towards-completing}.
We also used weighted matching to solve the remaining standard bribery cases for
$k$-Veto. Overall this settles the complexity for $k$-Approval and $k$-Veto for standard control
and bribery cases.
Unlike unweighted matching, weighted matching is not often used
to solve voting problems (though it was, e.g., recently used in a multiwinner election setting~\cite{cel-hua-vis:c:multiwinner-fairness} and for bipartite graphs to compute
the distance between elections~\cite{fal-sko-sli-szu-tal:c:similar-elections}), 
but as our results show, it is a powerful technique whose potential deserves to be further explored.

\section*{Acknowledgements}

This work was supported in part by NSF-DUE-1819546.
We thank the anonymous reviewers for their helpful comments.

\newcommand{\etalchar}[1]{$^{#1}$}

\appendix

\section{Appendix}

\subsection*{Counterexample for 2-Veto-\$Bribery NP-completeness result from~Bredereck and Talmon~[2016]}

As mentioned in the main text, 2-Veto-\$Bribery was incorrectly claimed to be \np-complete by Bredereck and Talmon~\shortcite{bre-tal:j:edge-cover}.
The flaw in their argument is an incorrect \np-hardness reduction for Min-Weight $b$-Cover for Multigraphs (there called Simple Weighted $b$-Edge Cover), which asks given an edge-weighted multigraph $G = (V_G,E)$, function $b: V \to \mathbb{N}$, and integer $k \ge 0$, if there is a set of edges $E' \subseteq E$ of weight at most $k$ such that each vertex is incident to at least $b(v)$ edges. This problem is in fact in P~\cite{edm-joh:c:matching}, which directly implies that their reduction is incorrect (assuming $\p \neq \np$). We provide the counterexample below only to show where their proof fails for the interested reader.
We contacted Bredereck and Talmon, who confirm the bug~\cite{bre:misc:bug}.
Note that in our counterexample we generally use the notation from Bredereck and Talmon~\cite{bre-tal:j:edge-cover} to make it easier to compare directly to their paper.

Their proof claims to show this problem \np-hard with a reduction from the following version of Numerical Matching with Target Sums (NMTS): Given three multisets of positive integers $A = \{a_1, \dots, a_n\}$, $B = \{b_1, \dots, b_n\}$, and $C = \{c_1, \dots, c_n\}$ where all $3n$ integers are distinct, does there exist a partition
of $A \cup B$ into $n$ pairs such that for each $i$, $1 \le i \le n$, the $i$th pair consists of an element from $A$ and an element from $B$ that sum to $c_i$?

The reduction in the proof of Theorem~2 from Bredereck and Talmon claims to
construct a graph from a given NMTS instance such that there is a $b$-Edge Cover of weight at most $8n + 6(\#_3-n)+2n(n-1)n^4$ (where
$\#_3$ denotes that number of triples $a_i$, $b_j$, $c_\ell$ such that $a_i + b_j = c_\ell$) if and only if the NMTS instance is positive.
We now present our counterexample.

Consider the following negative instance of NMTS with $n = 2$.
\begin{itemize}
\item $A = \{3, 4\}$
\item $B = \{5, 6\}$
\item $C = \{8, 9\}$
\end{itemize}

The graph constructed using the reduction in the proof of Theorem~2 from
Bredereck and Talmon~\cite{bre-tal:j:edge-cover} is presented in Figure~\ref{fig:counterexample}.
We will show that this graph in fact has a cover of weight
$22 + 4n^4 = 8n + 6(\#_3-n)+2n(n-1)n^4$, which shows that reduction is incorrect.

\begin{figure}[h!]
\begin{center}
\begin{tikzpicture}[scale=0.775]
\tikzstyle{new style 0}=[fill=white, draw=black, shape=circle, draw=black]
\tikzstyle{none}=[]

\tikzstyle{new edge style 0}=[-, dashed]

\pgfdeclarelayer{edgelayer}
\pgfdeclarelayer{nodelayer}
\pgfsetlayers{edgelayer,nodelayer,main}
	\begin{pgfonlayer}{nodelayer}
		\node [style=new style 0] (0) at (10, 4.85) {row$_2$};
		\node [style=new style 0] (1) at (10, 14.30) {row$_1$};
		\node [style=new style 0] (2) at (14, 10) {col$_2$};
		\node [style=new style 0] (3) at (6.25, 10) {col$_1$};
		\node [style=new style 0] (4) at (7.75, 7.5) {$v(\{4,5\})$};
		\node [style=new style 0] (5) at (12.25, 7.5) {$v(\{4,6\})$};
		\node [style=new style 0] (6) at (7.75, 12.25) {$v(\{3,5\})$};
		\node [style=new style 0] (7) at (12.25, 12.25) {$v(\{3,6\})$};
		\node [style=new style 0] (8) at (1, 11) {$8$};
		\node [style=new style 0] (9) at (4, 11) {$3$};
		\node [style=new style 0] (10) at (2.5, 10.25) {$\star$};
		\node [style=new style 0] (11) at (1, 9.5) {$7$};
		\node [style=new style 0] (12) at (4, 9.5) {$4$};
		\node [style=new style 0] (13) at (1.75, 12) {$1$};
		\node [style=new style 0] (14) at (3.5, 12) {$2$};
		\node [style=new style 0] (15) at (1.5, 8.5) {$6$};
		\node [style=new style 0] (16) at (3.5, 8.5) {$5$};
		\node [style=none] (18) at (2.5, 9.6) {$4$};
		\node [style=new style 0] (19) at (17.75, 8.5) {$8$};
		\node [style=new style 0] (20) at (20.75, 8.5) {$3$};
		\node [style=new style 0] (21) at (19.25, 7.75) {$\star$};
		\node [style=new style 0] (22) at (17.75, 7) {$7$};
		\node [style=new style 0] (23) at (20.75, 7) {$4$};
		\node [style=new style 0] (24) at (18.5, 9.5) {$1$};
		\node [style=new style 0] (25) at (20.25, 9.5) {$2$};
		\node [style=new style 0] (26) at (18.25, 6) {$6$};
		\node [style=new style 0] (27) at (20.25, 6) {$5$};
		\node [style=none] (28) at (19.25, 7.1) {$4$};
		\node [style=new style 0] (29) at (4, 1.75) {$8$};
		\node [style=new style 0] (30) at (7, 1.75) {$3$};
		\node [style=new style 0] (31) at (5.5, 1) {$\star$};
		\node [style=new style 0] (32) at (4, 0.25) {$7$};
		\node [style=new style 0] (33) at (7, 0.25) {$4$};
		\node [style=new style 0] (34) at (4.75, 2.75) {$1$};
		\node [style=new style 0] (35) at (6.5, 2.75) {$2$};
		\node [style=new style 0] (36) at (4.75, -0.5) {$6$};
		\node [style=new style 0] (37) at (6.5, -0.5) {$5$};
		\node [style=none] (38) at (5.55, 0.35) {$4$};
		\node [style=new style 0] (39) at (18.25, 4.25) {$v(9^1)$};
		\node [style=new style 0] (40) at (20.25, 4.25) {$v(9^2)$};
		\node [style=new style 0] (41) at (1.5, 6.5) {$v(8^1)$};
		\node [style=new style 0] (42) at (3.5, 6.5) {$v(8^2)$};
		\node [style=none] (43) at (13.0, 13.35) {$2$};
		\node [style=none] (44) at (7.0, 13.35) {$2$};
		\node [style=none] (45) at (13.3, 6.75) {$2$};
		\node [style=none] (46) at (6.7, 6.75) {$2$};
		\node [style=none] (47) at (6.6, 8.75) {$n^4$};
		\node [style=none] (48) at (13.75, 8.75) {$n^4$};
		\node [style=none] (49) at (13.5, 11.25) {$n^4$};
		\node [style=none] (50) at (11.25, 13.75) {$n^4$};
		\node [style=none] (51) at (8.75, 13.75) {$n^4$};
		\node [style=none] (52) at (6.5, 11.25) {$n^4$};
		\node [style=none] (53) at (11.25, 5.75) {$n^4$};
		\node [style=none] (54) at (8.75, 5.75) {$n^4$};
	\end{pgfonlayer}
	\begin{pgfonlayer}{edgelayer}
		\draw (7) to (2);
		\draw (2) to (5);
		\draw (5) to (0);
		\draw (0) to (4);
		\draw (4) to (3);
		\draw (3) to (6);
		\draw (6) to (1);
		\draw (7) to (1);
		\draw (14) to (10);
		\draw (10) to (9);
		\draw (10) to (12);
		\draw (10) to (16);
		\draw (10) to (15);
		\draw (10) to (11);
		\draw (10) to (8);
		\draw (10) to (13);
		\draw (8) to (11);
		\draw (9) to (12);
		\draw (25) to (21);
		\draw (21) to (20);
		\draw (21) to (23);
		\draw (21) to (27);
		\draw (21) to (26);
		\draw (21) to (22);
		\draw (21) to (19);
		\draw (21) to (24);
		\draw (19) to (22);
		\draw (20) to (23);
		\draw (35) to (31);
		\draw (31) to (30);
		\draw (31) to (33);
		\draw (31) to (37);
		\draw (31) to (36);
		\draw (31) to (32);
		\draw (31) to (29);
		\draw (31) to (34);
		\draw (29) to (32);
		\draw (30) to (33);
		\draw (4) to (34);
		\draw (4) to (35);
		\draw (7) to (24);
		\draw [bend left=15] (7) to (25);
		\draw (39) to (26);
		\draw (27) to (40);
		\draw [bend left=45, looseness=1.25] (39) to (36);
		\draw [bend left] (40) to (37);
		\draw [bend left] (13) to (6);
		\draw (14) to (6);
		\draw (15) to (41);
		\draw (16) to (42);
	\end{pgfonlayer}
\end{tikzpicture}
\end{center}
\caption{Graph constructed using the reduction in the proof of Theorem 2 from Bredereck and Talmon~\cite{bre-tal:j:edge-cover} given the negative NMTS instance $A = \{3,4\}$, $B = \{5,6\}$, $C=\{8,9\}$, and $n = 2$. Vertices and edges have $b$-values and weights of 1 except where labeled explicitly.
We will show that this graph has a cover of weight
$22 + 4n^4$.
} 
\label{fig:counterexample}
\end{figure}
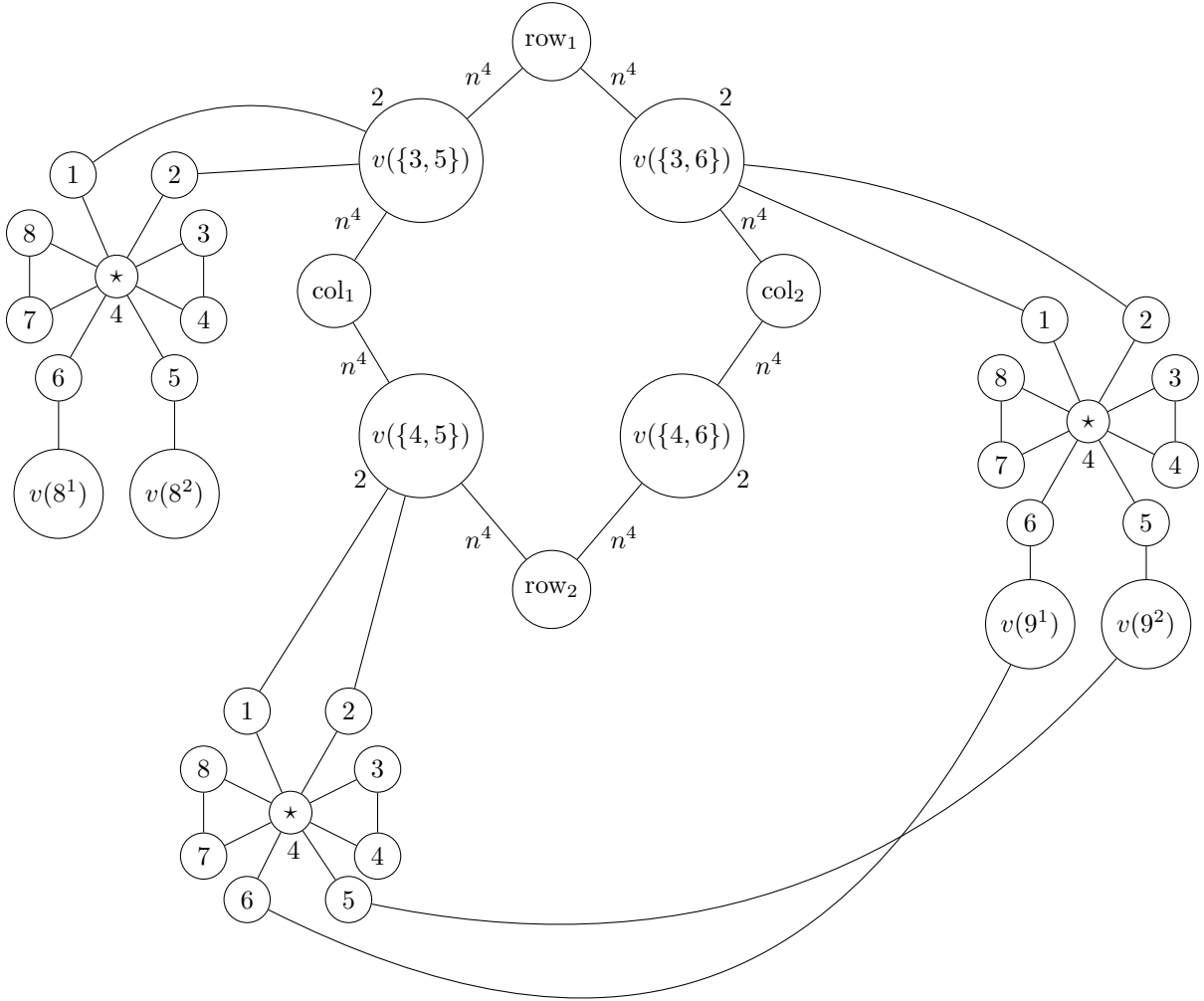

\newpage

We now show that this graph has the following $b$-Edge Cover of weight $22 + 4n^4$.
We describe the cover below (we also illustrate this cover in Figure~\ref{fig:counterexample-cover}).

We first handle the 4 heavy (weight $n^4$) edges.
\begin{itemize}
\item $(\text{row}_1, v(\{3,6\}))$
\item $(\text{row}_2, v(\{4,6\}))$
\item $(\text{col}_1, v(\{4,5\}))$
\item $(\text{col}_2, v(\{4,6\}))$
\end{itemize}
This handles the demand of the vertices row$_1$, row$_2$, col$_1$, col$_2$, and $v(\{4,6\})$. We still have to cover $v(\{3,6\})$ and $v(\{4,5\})$ once and $v(\{3,5\})$ twice.

We can handle the demand of the vertex $v(\{3,5\})$, its corresponding connectivity gadget (the nine-vertex subgraph that connects it to the vertices $v(8^1)$ and $v(8^2)$), and the target vertices $v(8^1)$ and $v(8^2)$ by adding the second configuration described in the proof from Bredereck and Talmon~\cite{bre-tal:j:edge-cover}. This adds 8 weight-1 edges to the cover and meets the demand of all of the involved vertices.

What’s left is to handle the remaining demand (which equals 1) for vertices $v(\{3,6\})$, $v(\{4,5\})$, their corresponding connectivity gadgets, and the target vertices $v(9^1)$ and $v(9^2)$. We handle this by finding a cover of the two gadgets that does not correspond to either of the expected configurations in the proof from Bredereck and Talmon~\cite{bre-tal:j:edge-cover}.

For $v(\{3,6\})$, add the following edges to the cover.
\begin{itemize}
\item $(v(\{3,6\}), 1)$
\item $(2, \star)$
\item $(7, \star)$
\item $(8, \star)$
\item $(5, \star)$
\item $(3, 4)$
\item $(6, v(9^1))$
\end{itemize}
This adds 7 weight-1 edges to the cover and meets the remaining demand (=1) of $v(\{3,6\})$ and the demands of the inner vertices of the connectivity gadget and of the vertex $v(9^1)$.

For $v(\{4,5\})$, add the following edges to the cover.
\begin{itemize}
\item $(v(\{4,5\}), 1)$
\item $(2, \star)$
\item $(7, \star)$
\item $(8, \star)$
\item $(6, \star)$
\item $(3, 4)$
\item $(5, v(9^2))$
\end{itemize}
This adds 7 weight-1 edges to the cover and meets the remaining demand (=1) of $v(\{4,5\})$ and the demands of the inner vertices of the connectivity gadget and of the vertex $v(9^2)$.

The total weight of the cover described above is $22+4n^4$.

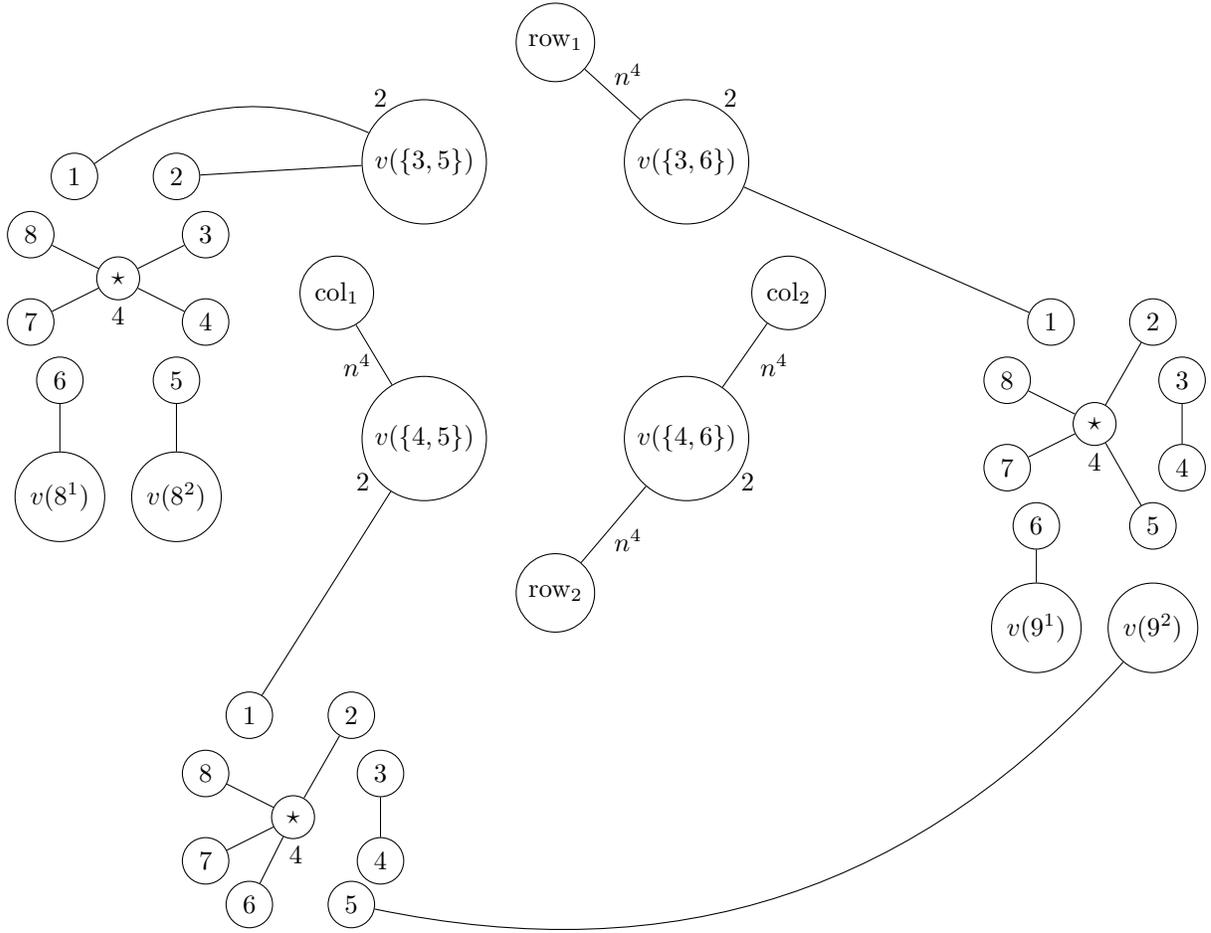
\begin{figure}[h]
\begin{center}
\begin{tikzpicture}[scale=0.775]
\tikzstyle{new style 0}=[fill=white, draw=black, shape=circle, draw=black]
\tikzstyle{none}=[]

\tikzstyle{new edge style 0}=[-, dashed]

\pgfdeclarelayer{edgelayer}
\pgfdeclarelayer{nodelayer}
\pgfsetlayers{edgelayer,nodelayer,main}
	\begin{pgfonlayer}{nodelayer}
		\node [style=new style 0] (0) at (10, 4.85) {row$_2$};
		\node [style=new style 0] (1) at (10, 14.3) {row$_1$};
		\node [style=new style 0] (2) at (14, 10) {col$_2$};
		\node [style=new style 0] (3) at (6.25, 10) {col$_1$};
		\node [style=new style 0] (4) at (7.75, 7.5) {$v(\{4,5\})$};
		\node [style=new style 0] (5) at (12.25, 7.5) {$v(\{4,6\})$};
		\node [style=new style 0] (6) at (7.75, 12.25) {$v(\{3,5\})$};
		\node [style=new style 0] (7) at (12.25, 12.25) {$v(\{3,6\})$};
		\node [style=new style 0] (8) at (1, 11) {$8$};
		\node [style=new style 0] (9) at (4, 11) {$3$};
		\node [style=new style 0] (10) at (2.5, 10.25) {$\star$};
		\node [style=new style 0] (11) at (1, 9.5) {$7$};
		\node [style=new style 0] (12) at (4, 9.5) {$4$};
		\node [style=new style 0] (13) at (1.75, 12) {$1$};
		\node [style=new style 0] (14) at (3.5, 12) {$2$};
		\node [style=new style 0] (15) at (1.5, 8.5) {$6$};
		\node [style=new style 0] (16) at (3.5, 8.5) {$5$};
		\node [style=none] (18) at (2.5, 9.6) {$4$};
		\node [style=new style 0] (19) at (17.75, 8.5) {$8$};
		\node [style=new style 0] (20) at (20.75, 8.5) {$3$};
		\node [style=new style 0] (21) at (19.25, 7.75) {$\star$};
		\node [style=new style 0] (22) at (17.75, 7) {$7$};
		\node [style=new style 0] (23) at (20.75, 7) {$4$};
		\node [style=new style 0] (24) at (18.5, 9.5) {$1$};
		\node [style=new style 0] (25) at (20.25, 9.5) {$2$};
		\node [style=new style 0] (26) at (18.25, 6) {$6$};
		\node [style=new style 0] (27) at (20.25, 6) {$5$};
		\node [style=none] (28) at (19.25, 7.1) {$4$};
		\node [style=new style 0] (29) at (4, 1.75) {$8$};
		\node [style=new style 0] (30) at (7, 1.75) {$3$};
		\node [style=new style 0] (31) at (5.5, 1) {$\star$};
		\node [style=new style 0] (32) at (4, 0.25) {$7$};
		\node [style=new style 0] (33) at (7, 0.25) {$4$};
		\node [style=new style 0] (34) at (4.75, 2.75) {$1$};
		\node [style=new style 0] (35) at (6.5, 2.75) {$2$};
		\node [style=new style 0] (36) at (4.75, -0.5) {$6$};
		\node [style=new style 0] (37) at (6.5, -0.5) {$5$};
		\node [style=none] (38) at (5.55, 0.35) {$4$};
		\node [style=new style 0] (39) at (18.25, 4.25) {$v(9^1)$};
		\node [style=new style 0] (40) at (20.25, 4.25) {$v(9^2)$};
		\node [style=new style 0] (41) at (1.5, 6.5) {$v(8^1)$};
		\node [style=new style 0] (42) at (3.5, 6.5) {$v(8^2)$};
		\node [style=none] (43) at (13.0, 13.35) {$2$};
		\node [style=none] (44) at (7.0, 13.35) {$2$};
		\node [style=none] (45) at (13.3, 6.75) {$2$};
		\node [style=none] (46) at (6.7, 6.75) {$2$};
		\node [style=none] (47) at (6.6, 8.75) {$n^4$};
		\node [style=none] (48) at (13.75, 8.75) {$n^4$};
		\node [style=none] (50) at (11.25, 13.75) {$n^4$};
		\node [style=none] (53) at (11.25, 5.75) {$n^4$};
	\end{pgfonlayer}
	\begin{pgfonlayer}{edgelayer}
		\draw (2) to (5);
		\draw (5) to (0);
		\draw (4) to (3);
		\draw (7) to (1);
		\draw (10) to (9);
		\draw (10) to (12);
		\draw (10) to (11);
		\draw (10) to (8);
		\draw (25) to (21);
		\draw (21) to (27);
		\draw (21) to (22);
		\draw (21) to (19);
		\draw (20) to (23);
		\draw (35) to (31);
		\draw (31) to (36);
		\draw (31) to (32);
		\draw (31) to (29);
		\draw (30) to (33);
		\draw (4) to (34);
		\draw (7) to (24);
		\draw (39) to (26);
		\draw [bend left] (40) to (37);
		\draw [bend left] (13) to (6);
		\draw (14) to (6);
		\draw (15) to (41);
		\draw (16) to (42);
	\end{pgfonlayer}
\end{tikzpicture}
\end{center}
\caption{$b$-Edge Cover of weight $22+4n^4$ for the graph constructed using the reduction in the proof of Theorem 2 from Bredereck and Talmon~\cite{bre-tal:j:edge-cover} given the negative NMTS instance $A = \{3,4\}$, $B = \{5,6\}$, $C=\{8,9\}$, and $n = 2$. Vertices and edges have $b$-values and weights of 1 except where labeled explicitly.}
\label{fig:counterexample-cover}
\end{figure}


\begin{thebibliography}{ENR{\etalchar{+}}21}

\bibitem[BCE{\etalchar{+}}16]{bra-con-end-lan-pro:b:comsoc-handbook}
F.~Brandt, V.~Conitzer, U.~Endriss, J.~Lang, and A.~Procaccia.
\newblock {\em Handbook of Computational Social Choice}.
\newblock Cambridge University Press, 2016.

\bibitem[Ber73]{ber:b:graphsbook}
C.~Berge.
\newblock {\em Graphs and Hypergraphs}.
\newblock North-Holland Publishing Company, 1973.

\bibitem[BGW16]{bre-gol-woe:t:2-approval}
R.~Bredereck, J.~Goldsmith, and G.~Woeginger.
\newblock Working group: (control and) bribery in $k$-approval voting---open
  problems.
\newblock {\em Dagsthul Reports}, 5:24, 2016.

\bibitem[Bre23]{bre:misc:bug}
R.~Bredereck, March~30, 2023.
\newblock Personal communication.

\bibitem[BT16]{bre-tal:j:edge-cover}
R.~Bredereck and N.~Talmon.
\newblock {NP}-hardness of two edge cover generalizations with applications to
  control and bribery for approval voting.
\newblock {\em Information Processing Letters}, pages 147--152, 2016.

\bibitem[BTT92]{bar-tov-tri:j:control}
J.~{{Bartholdi}}, III, C.~Tovey, and M.~Trick.
\newblock How hard is it to control an election?
\newblock {\em Mathematical and Computer Modelling}, 16(8/9):27--40, 1992.

\bibitem[CHV18]{cel-hua-vis:c:multiwinner-fairness}
L.~Celis, L.~Huang, and N.~Vishnoi.
\newblock Multiwinner voting with fairness constraints.
\newblock In {\em Proceedings of the 27th International Joint Conference on
  Artificial Intelligence}, pages 144--151, July 2018.

\bibitem[EJ70]{edm-joh:c:matching}
J.~Edmonds and E.~Johnson.
\newblock Matching: a well-solved class of integer linear programs.
\newblock In {\em Combinatorial structures and their applications (Gordon and
  Breach)}, pages 89--92, 1970.

\bibitem[ENR{\etalchar{+}}21]{erd-nev-reg-rot-yan-zor:j:towards-completing}
G.~Erd\'{e}lyi, M.~Neveling, C.~Reger, J.~Rothe, Y.~Yang, and R.~Zorn.
\newblock Towards completing the puzzle: {C}omplexity of control by replacing,
  adding, and deleting candidates or voters.
\newblock {\em Autonomous Agents and Multi-Agent Systems}, 35(41):1--48, 2021.

\bibitem[FH22a]{fit-hem:c:random-classification}
Z.~Fitzsimmons and E.~Hemaspaandra.
\newblock Insight into voting problem complexity using randomized classes.
\newblock In {\em Proceedings of the 31st International Joint Conference on
  Artificial Intelligence}, pages 293--299, July 2022.

\bibitem[FH22b]{fit-hem:t:random-classification}
Z.~Fitzsimmons and E.~Hemaspaandra.
\newblock Insight into voting problem complexity using randomized classes.
\newblock Tech.\ Rep. arXiv:2204.12856~[cs.GT], arXiv.org, April 2022.

\bibitem[FHH09]{fal-hem-hem:j:bribery}
P.~Faliszewski, E.~Hemaspaandra, and L.~Hemaspaandra.
\newblock How hard is bribery in elections?
\newblock {\em Journal of Artificial Intelligence Research}, 35:485--532, 2009.

\bibitem[FHH15]{fal-hem-hem:j:weighted-control}
P.~Faliszewski, E.~Hemaspaandra, and L.~Hemaspaandra.
\newblock Weighted electoral control.
\newblock {\em Journal of Artificial Intelligence Research}, 52:507--542, 2015.

\bibitem[FR16]{fal-rot:b:handbook-comsoc-control-and-bribery}
P.~Faliszewski and J.~Rothe.
\newblock Control and bribery in voting.
\newblock In {\em Handbook of Computational Social Choice}, pages 146--168.
  Cambridge University Press, 2016.

\bibitem[FSS{\etalchar{+}}19]{fal-sko-sli-szu-tal:c:similar-elections}
P.~Faliszewski, P.~Skowron, A.~Slinko, S.~Szufa, and N.~Talmon.
\newblock How similar are two elections?
\newblock In {\em Proceedings of the 33rd AAAI Conference on Artificial
  Intelligence}, pages 1909--1916, January 2019.

\bibitem[Ger95]{ger:b:matching}
A.~Gerards.
\newblock Matching.
\newblock In M.~Ball, T.~Magnanti, C.~Monma, and G.~Nemhauser, editors, {\em
  Handbooks in OR and MS Vol. 7}, chapter~3, pages 135--224. Cambridge
  University Press, 1995.

\bibitem[Gon85]{gon:j:restricted-exact-cover}
T.~Gonzalez.
\newblock Clustering to minimize the maximum intercluster distance.
\newblock {\em Theoretical Computer Science}, 38:293--306, 1985.

\bibitem[Kar72]{kar:b:reducibilities}
R.~Karp.
\newblock Reducibility among combinatorial problems.
\newblock In {\em Proc.~of Symposium on Complexity of Computer Computations},
  pages 85--103, 1972.

\bibitem[Lin11]{lin:c:manip-k-app}
A.~Lin.
\newblock The complexity of manipulating $k$-approval elections.
\newblock In {\em Proceedings of the 3rd International Conference on Agents and
  Artificial Intelligence}, pages 212--218, January 2011.

\bibitem[Lin12]{lin:thesis:elections}
A.~Lin.
\newblock {\em Solving Hard Problems in Election Systems}.
\newblock PhD thesis, Rochester Institute of Technology, Rochester, NY, 2012.

\bibitem[LNR{\etalchar{+}}15]{lor-nar-ros-bre-wal:c:replacing-control}
A.~Loreggia, N.~Narodytska, F.~Rossi, K.~Venable, and T.~Walsh.
\newblock Controlling elections by replacing candidates or votes.
\newblock In {\em Proceedings of the 14th International Conference on
  Autonomous Agents and Multiagent Systems}, pages 1737--1738, May 2015.

\bibitem[MF14]{mia-fal:j:priced-control}
T.~Miasko and P.~Faliszewski.
\newblock The complexity of priced control in elections.
\newblock {\em Annals of Mathematics and Artificial Intelligence}, pages 1--26,
  2014.

\bibitem[MVV87]{mul-vaz-vaz:j:matching}
K.~Mulmuley, U.~Vazirani, and V.~Vazirani.
\newblock Matching is as easy as matrix inversion.
\newblock {\em Combinatorica}, 7(1):105--113, 1987.

\bibitem[PY82]{pap-yan:j:spanning-tree}
C.~Papadimitriou and M.~Yannakakis.
\newblock The complexity of restricted spanning tree problems.
\newblock {\em Journal of the ACM}, 29(2):285--309, 1982.

\bibitem[Tut54]{tut:j:factor}
W.~Tutte.
\newblock A short proof of the factor theorem for finite graphs.
\newblock {\em Canadian Journal of Mathematics}, 6:347--352, 1954.

\end{thebibliography}
\end{document}